\newcommand{\blind}{1}
\newcommand{\argmin}{\mathop{\mathrm{argmin}}}
\theoremstyle{plain}
\newtheorem{assumption}{\protect\assumptionname}
\newtheorem{lem}{\protect\lemmaname}
\newtheorem{thm}{\protect\theoremname}
\theoremstyle{remark}
\newtheorem{rem}{\protect\remarkname}
\newtheorem{definition}{Definition}
\providecommand{\assumptionname}{Assumption}
\providecommand{\corollaryname}{Corollary}
\providecommand{\lemmaname}{Lemma}
\providecommand{\remarkname}{Remark}
\providecommand{\theoremname}{Theorem}
\providecommand{\examplename}{Example}
\definecolor{dgreen}{rgb}{0, 0.7, 0.0}
\begin{document}

\date{January 2026}

\def\spacingset#1{\renewcommand{\baselinestretch}
{#1}\small\normalsize} \spacingset{1}

\if1\blind
{
  \title{\bf Forecasted Treatment Effects \\ with Short Panels\thanks{We thank Isaiah Andrews, Xavier D'Haultfoeuille, Chris Muris, Krishna Pendakur, and seminar and conference participants at several venues for helpful comments. We are deeply grateful to Laura Zell (University of Lucerne) for developing the R package fatEstimator, which implements the FAT estimator used in this paper and can be found at \url{https://github.com/LauraZell/FAT_package}. Irene Botosaru gratefully acknowledges financial support from the Social Sciences and Humanities Research Council of Canada IG 435-2021-0778 and the Canada Research Chairs Program. Martin Weidner gratefully acknowledges financial support by  the
European Research Council grant ERC-2018-CoG-819086-PANEDA. 
}}
  \author{\setcounter{footnote}{2}
  Irene Botosaru\thanks{McMaster University, Department of Economics. Email: {\tt botosari@mcmaster.ca}}, Raffaella Giacomini\thanks{University College London, Department of Economics/Cemmap. Email: {\tt r.giacomini@ucl.ac.uk}}, Martin Weidner\thanks{Oxford University, Department of Economics and Nuffield College. Email: {\tt martin.weidner@economics.ox.ac.uk}}}
  \maketitle
} \fi

\if0\blind
{
  \bigskip
  \bigskip
  \bigskip
  \begin{center}
    {\LARGE\bf Forecasted Treatment Effects with Short Panels}
\end{center}
  \medskip
} \fi

\bigskip
\begin{abstract}
\noindent
We consider estimation and inference of the effects of a policy in the absence of an untreated or control group. We obtain unbiased estimators of individual (heterogeneous) treatment effects and a consistent and asymptotically normal estimator of the average treatment effect. Our estimator averages, across individuals, the difference between observed post‑treatment outcomes and unbiased forecasts of their counterfactuals, based on a (short) time series of pre-treatment data. The paper emphasizes the importance of focusing on forecast unbiasedness rather than accuracy when the end goal is estimation of average treatment effects. We show that simple basis function regressions ensure forecast unbiasedness for a broad class of data generating processes for the counterfactuals. In contrast, forecasting based on a specific parametric model requires stronger assumptions and is prone to misspecification and estimation bias.  We show that our method can replicate the findings of some previous empirical studies but it does so without using an untreated or control group.

\end{abstract}

\noindent
{\it Keywords:}  Polynomial regressions; Forecast unbiasedness; Counterfactuals; Misspecification; Heterogeneous treatment effects
\vfill

\newpage
\spacingset{1.6}

\section{Introduction}

Evaluating the effect of a policy or a treatment usually requires comparing treated and untreated units. Standard approaches such as difference-in-differences (DiD) or two-way fixed effects (TWFE) regressions rely on outcomes of untreated units to identify the counterfactual outcomes of treated units. These methods rely on assumptions such as unconfoundedness or parallel trends (e.g., \citealt{GoodmanBacon, CallawaySantAnna, SunAbraham}). They cannot be applied when all units receive the treatment.

This situation is common. Examples include national tax reforms, health programs, and environmental regulations that affect all individuals or regions at the same time. Even when adoption is staggered, the last periods after full treatment do not contain untreated units, so conventional DiD or event-study designs cannot recover treatment effects for those periods. In such cases, researchers often turn to structural models or to simple before–after comparisons. These settings call for methods that make the required assumptions explicit and allow researchers to evaluate their plausibility.

This paper studies how to estimate treatment effects when no valid control group exists. We propose a transparent and practical approach based on forecasting each treated unit’s counterfactual outcome from its own pre-treatment data using basis function regressions. The key idea is that in short panels, the property required for identification is forecast unbiasedness rather than forecast accuracy. Forecasting methods in time-series and machine learning aim to minimize prediction error and often allow some bias to reduce variance. In policy evaluation, this objective is not appropriate. An unbiased forecast of the counterfactual outcome guarantees an unbiased estimate of the treatment effect, while a forecast that is accurate but biased does not. Of course, our approach relies on its own identifying assumptions, which we make explicit, but they differ from those used in designs that compare treated and untreated units.

Our approach, the Forecasted Average Treatment effect (FAT) estimator, computes the average treatment effect on the treated as the average difference between the observed post-treatment outcome and an unbiased forecast of the outcome in the absence of treatment. What matters is that the forecasting rule produces unbiased predictions on average across individuals.

The paper makes three contributions. First, we develop a simple and general framework for estimating treatment effects when no untreated units are available. Under appropriate assumptions, the FAT estimator is consistent and asymptotically normal even when the time dimension is short, the panel is unbalanced, and treatment effects are heterogeneous.

Second, we characterize a broad class of data generating processes (DGPs) under which the required forecast unbiasedness condition holds. When individual counterfactual outcomes can be generally expressed as the sum of (at most) three unobserved components - a stationary process, a unit root process, and a deterministic trend - unbiased forecasts can be obtained by regressing pre-treatment outcomes on basis functions of time, such as low-order polynomials. This result allows for DGPs with fixed effects, heterogeneous autoregressive parameters, and non-stationary trends, without requiring a detailed specification of the stochastic component. The framework also accommodates heterogeneous dynamic processes, such as unit-specific autoregressive parameters, which are difficult to handle with standard short-panel methods. We call our proposed approach ``Unobserved-Components FAT''.

Third, we compare our general ``Unobserved-Components FAT'' approach with what we refer to as ``Parametric-Model FAT'', where the researcher specifies and estimates a particular parametric model such as an AR(1) to forecast counterfactuals. This approach relies on stronger assumptions and can perform poorly in short panels because of misspecification or estimation bias. In contrast, our regression-based approach that focuses directly on forecast unbiasedness is robust to specification errors and is not affected by the incidental parameter problem.

The FAT estimator can also be used when untreated units exist but are imperfect comparators. In such cases, FAT can provide a robustness check for studies that rely on, e.g., parallel-trends assumptions. The framework complements recent work on DiD and event-study designs (e.g., \citealt{deChaisemartinX2020, CallawaySantAnna, SunAbraham}) by providing an alternative benchmark based on explicit assumptions about the individual-specific counterfactual process rather than on cross-group comparability.

Crucially, the approach based on FAT is advantageous when outcomes exhibit dynamics. 
Recent work \citep{MarxTamerTang, Klosin2024, BotosaruLiu2025, Cornwall2025} 
has shown that traditional TWFE and dynamic panel estimators often yield biased or 
negatively weighted estimates of the treatment effect in the presence of serial 
correlation. The solutions proposed in this literature typically rely on the 
\textit{homogeneity} of the persistence parameters across units. By construction, 
FAT avoids conflating structural treatment effects with outcome persistence and is 
robust to feedback mechanisms \citep{Bonhomme2025}, while allowing for a much 
richer class of models, which include \textit{heterogeneous} forms of serial 
correlation, where the degree of outcome persistence can vary arbitrarily across 
individuals.

This paper relates to several strands of literature. It is thematically related to work on causal inference in settings where contemporaneous untreated comparison groups are not the primary source of identification, and where time-series structure plays a central role in constructing counterfactual outcomes from pre-treatment data. These include Bayesian approaches to causal forecasting (e.g., \citealt{Brodersen2015}); (comparative) interrupted time-series methods, in which untreated potential outcomes are modeled using finite-dimensional parametric mean functions in time—typically linear or piecewise-linear trends with level and/or slope changes at the intervention date (e.g., \citealt{Schochet2022,Bernaletal2017,BrownWarner1985}); and methods based on data-driven aggregation to purge unobserved aggregate confounders in designs with aggregate shocks (e.g., \citealt{ArkhangelskyKorovkin2023}).

The FAT framework differs from these strands along three dimensions. First, relative to Bayesian causal forecasting, FAT is designed to deliver forecast-unbiased counterfactuals under high-level restrictions on the untreated outcome process and does not require the researcher to commit to a fully specified likelihood or prior. Second, relative to interrupted time-series methods, FAT strictly nests the canonical segmented-regression mean specifications while allowing for richer latent evolution, including stochastic trend components and heterogeneous unit-specific dynamics. Third, relative to the aggregate-shock/aggregation literature, FAT does not take identification from exogenous aggregate shocks or instruments, nor does it rely on cross-sectional weighting to eliminate unobserved aggregate confounding; instead, it is a short-panel forecasting approach in which identification is driven by the internal time-series information in treated units’ pre-intervention histories.

In addition, the paper contributes to the panel-data literature on heterogeneous dynamic processes and random coefficients (e.g., \citealt{Chamberlain1992,ArellanoBonhomme2012,GrahamPowell2012}). For example, the projection estimator in \cite{ArellanoBonhomme2012} can be interpreted as a special case of FAT under specific restrictions on the latent outcome process. In particular, for a unit-root process without drift, the Arellano–Bonhomme forecast corresponds to an equal-weighted average of pre-treatment outcomes, whereas the FAT polynomial forecast assigns greater weight to more recent observations. When deterministic trends are present, the projection approach in \cite{ArellanoBonhomme2012} requires correct specification of the trend component, while FAT only requires the choice of a sufficiently rich family of basis functions, such as low-order polynomials. Therefore, FAT delivers unbiased counterfactual forecasts under weaker restrictions on the trend component, while retaining the projection-based intuition underlying existing random-coefficient panel estimators. 

Our analysis also relates to work on forecast unbiasedness in time-series models (e.g., \citealt{FullerHasza, Dufour1984}) and to the dynamic-panel literature on bias correction and consistent estimation with short time dimensions (e.g., \citealt{AngristPischke,BlundellBond1998}). Our approach differs by showing that consistent treatment effect estimation in short panels can be achieved through unbiased forecasts rather than through the specification of a full stochastic model.

The rest of the paper is organized as follows. Section~\ref{sec:Baseline-model} presents the framework and the FAT estimator. Section~\ref{sec:Covariates} extends the method to include covariates and discusses forecasting based on a parametric model. Section~\ref{sec:mc} reports simulation results. Section~\ref{sec:EmpiricalIllustration} provides an empirical illustration. Section~\ref{sec:Conclusion} concludes.

\section{Baseline case: Unobserved-Components FAT\label{sec:Baseline-model}}

In this section we develop our baseline “Unobserved-Components FAT” estimator for the average treatment effect on the treated (ATT) under universal treatment. We use individual pre‑treatment time series to forecast each unit’s post‑treatment counterfactual outcome and then average, across units, the difference between the observed outcome and its forecast. We first establish consistency and asymptotic normality of this estimator under a high‑level forecast‑unbiasedness condition, and then show that this condition holds for a broad class of unobserved‑components DGPs when counterfactuals are forecast using basis‑function regressions.

\subsection{Parameter of interest and estimator}

Consider a treatment or a policy that is implemented at a time $\tau$.\footnote{In Section \ref{stagger}, we discuss individual-specific timing.} Here, the treatment affects all individuals in the population at the same time, so that the treatment indicator of individual $i$ at time $t$ is given by $$d_{it}:=\ 1(t>\tau) \text{ for all } i=1,\dots, n.$$
We adopt the potential outcomes framework with each individual $i$ having two potential outcomes at each time $t$: $y_{it}(1)$ if the individual is exposed to the treatment and $y_{it}(0)$ if the individual is not exposed to the treatment. Due to the absence of an untreated group in our setting, we will henceforth simply refer to $y_{it}(0)$ as the ``counterfactual". 

Under the stable unit treatment value assumption (SUTVA), the observed outcome of individual $i$ at $t$ is: 
\begin{align*}
y_{it} =(1-d_{it})\,y_{it}(0)+d_{it}\,y_{it}(1).
\end{align*}
When all individuals are treated after time $\tau$, we have 
 \begin{align}
    y_{it} & =\left\{
    \begin{array}{ll} y_{it}(0) & \text{for}\;t\leq\tau,\\
    y_{it}(1) & \text{for}\;t>\tau.
    \end{array}\right.
    \label{obsoutcome}
\end{align}
We follow the literature on heterogeneous treatment effects in defining the ATT $h\geq 1$ periods after $\tau$ as: 
\begin{align}
{\rm ATT}_{h} & :=\frac{1}{n}\sum_i \mathbb{E}\left[y_{i\tau+h}(1)-y_{i\tau+h}\left(0\right)\right]\label{ATT}\\
 & =\frac{1}{n}\sum_i\mathbb{E}\left[y_{i\tau+h}-y_{i\tau+h}\left(0\right)\right],
\end{align}
where we used that $y_{i\tau+h}(1)=y_{i\tau+h}$ for $h\geq1$.\footnote{Note that with identically and independently distributed data across $i$, the right hand side of \eqref{ATT} reduces to the conventional $\mathbb{E}\left[y_{i\tau+h}-y_{i\tau+h}\left(0\right)\right]$.}

The challenge in identifying and estimating ${\rm ATT}_h$
is that the counterfactual $y_{i\tau+h}\left(0\right)$ is not observed
for $h \geq 1$. The conventional
approach in the presence of an untreated group is to impose sufficient assumptions that identify the parameter of interest from the observed post-treatment outcomes of the untreated group. In the absence of
an untreated group, we exploit pre-treatment individual time series to obtain
a forecast for $y_{i\tau+h}\left(0\right)$. We denote this forecast by $\widehat{y}_{i\tau+h}(0)$.

We call our proposed estimator for ${\rm ATT}_{h}$ the Forecasted Average Treatment effect estimator (FAT), defined as:
\begin{align}
\widehat{\rm FAT}_{h}:=\frac{1}{n}\sum_{i=1}^{n}\left[y_{i\tau+h}-\widehat{y}_{i\tau+h}\left(0\right)\right],\label{DefFAT}
\end{align}
where $\widehat{y}_{i\tau+h}(0)$ is a measurable function of past outcomes $\left\{y_{it}\right\}_{t\leq \tau}$.\footnote{In the baseline case, the individual forecast depends only
on the past outcomes of the treated, in particular, there are no covariates in the information set.} We explain how to obtain the forecast $\widehat{y}_{i\tau+h}(0)$ below. For now, we note that $\widehat{y}_{i\tau+h}(0)$ uses individual-specific pre-treatment outcomes, which naturally accommodates unbalanced panels and heterogeneous treatment effects.

We make the following high-level assumptions. 

\begin{assumption}[Average unbiasedness]
    \label{Unbiasedness}
    The forecast for time $\tau+h, \; h\geq1,$ is unbiased on average, in the sense that:
    \begin{align}
    \frac{1}{n}\sum_i\mathbb{E}\left(\widehat{y}_{i\tau+h}\left(0\right)-y_{i\tau+h}\left(0\right)\right)=0.
    \end{align}
\end{assumption}

Note that Assumption~\ref{Unbiasedness} guarantees that
$\mathbb{E}(\widehat{{\rm FAT}}_{h}) = {\rm ATT}_{h}$ so that ${\rm ATT}_{h}$ is identified.\footnote{The result follows trivially by writing ${\rm ATT}_{h} =\frac{1}{n}\sum_i\mathbb{E}\left[y_{i\tau+h}-\widehat{y}_{i\tau+h}(0) + \widehat{y}_{i\tau+h}(0) - y_{i\tau+h}\left(0\right)\right].$}

Let $u_{i\tau+h}:=y_{i\tau+h}-\widehat{y}_{i\tau+h}(0)$ be the forecasted individual treatment effect at $\tau+h, \; h\geq 1$.

\begin{assumption}[CLT]
    \label{ass:Sampling} 
Let $\{u_{i\tau+h}\}$ be a sequence of random variables that satisfies a CLT:
\begin{equation}
        \frac{\frac{1}{\sqrt{n}}\sum_i 
        \left( u_{i\tau+h}
       { - \mathbb{E}u_{i\tau+h}}
        \right)
        }{\bar{\sigma}_n} \Rightarrow{\cal N}\left(0,1\right),
    \end{equation}
    where $\bar{\sigma}^2_n:= {\rm Var}(\frac{1}{\sqrt{n}} \sum_i u_{i\tau+h})<\infty$. \label{V}
\end{assumption}
For example, when $\{u_{i\tau+h}\}$ is a sequence of cross-sectionally independent (but not identically distributed) random variables, Theorem 5.11 in \cite{White} gives an asymptotic normality result. 

What Assumptions 1 and 2 fundamentally rule out is shocks that affect all individuals after the treatment and that are unforecastable.\footnote{In principle, the assumptions allow for some common shocks to be captured by the method used to forecast the counterfactuals. For example, if the common shocks are deterministic and can be modeled as polynomials, our baseline method will be able to account for it.} This is an example of the assumptions that any method for treatment effect evaluation must inevitably make in the absence of a valid control group. The assumptions are more likely to be valid the higher the frequency at which the data is measured and the closer to the treatment date one evaluates the effect (i.e., they more plausibly hold for monthly data and for effects evaluated one month after the treatment than for yearly data and effects evaluated several years after the treatment). Below we provide some discussion of how the assumption of no common shocks after the treatment could be weakened. As one would expect, the assumption can be relaxed when there is a group of untreated individuals that are subject to the same unforecastable shock, as we discuss in the Online Appendix (Section \ref{controlgroup}). Perhaps less obviously, the assumption could also be relaxed in the absence of an untreated group, as long as we observe another variable for the treated units that is subject to the same unforecastable shock but not subject to the treatment (see the discussion in Section \ref{NCcommonshock}).

\begin{lem}[Consistency and asymptotic normality]
    \label{lem:LemmaConsistencyFAT} 
    For each $i=1,\ldots,n$, let the forecast $\widehat{y}_{i\tau+h}(0)$, $h\geq 1$, be a function of $\{y_{it}\}_{t\leq\tau}$. Let Assumptions \ref{Unbiasedness} and \ref{ass:Sampling} hold. Then\footnote{
Estimation of the variance $\bar{\sigma}_n^2$
is discussed in Appendix~\ref{app:VarEstimation}.
    }
    \begin{align*}
    \frac{\sqrt{n}\left(\widehat{{\rm FAT}}_{h}-{\rm ATT}_{h}\right)}{\bar{\sigma}_n}\Rightarrow{\cal N}\left(0,1\right).
    \end{align*}
\end{lem}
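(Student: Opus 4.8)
The plan is to recognize that, under the two assumptions, the studentized statistic is \emph{identically equal} to the quantity whose asymptotic normality is already posited in Assumption~\ref{ass:Sampling}, so the conclusion follows from a short algebraic rearrangement rather than any genuinely new probabilistic argument. First I would rewrite the estimator directly in terms of the forecasted individual treatment effects: since $u_{i\tau+h} = y_{i\tau+h} - \widehat{y}_{i\tau+h}(0)$, the definition in \eqref{DefFAT} gives $\widehat{\rm FAT}_h = \frac{1}{n}\sum_i u_{i\tau+h}$, a simple cross-sectional average.

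The key step is to show that the population mean of this average coincides with the target parameter, i.e. that $\frac{1}{n}\sum_i \mathbb{E}(u_{i\tau+h}) = {\rm ATT}_h$. This is precisely the identification claim recorded just after Assumption~\ref{Unbiasedness}: writing $\mathbb{E}(u_{i\tau+h}) = \mathbb{E}(y_{i\tau+h} - \widehat{y}_{i\tau+h}(0))$ and subtracting the definition \eqref{ATT} of ${\rm ATT}_h$ leaves $\frac{1}{n}\sum_i \mathbb{E}(y_{i\tau+h}(0) - \widehat{y}_{i\tau+h}(0))$, which vanishes by Assumption~\ref{Unbiasedness}. With this identity in hand, centering the estimator at ${\rm ATT}_h$ is the same as centering the sum of $u_{i\tau+h}$ at its own mean, so that
\begin{align*}
\frac{\sqrt{n}\left(\widehat{\rm FAT}_h - {\rm ATT}_h\right)}{\bar{\sigma}_n} = \frac{\frac{1}{\sqrt{n}}\sum_i\left(u_{i\tau+h} - \mathbb{E}u_{i\tau+h}\right)}{\bar{\sigma}_n}.
\end{align*}
The right-hand side is exactly the object that converges in distribution to $\mathcal{N}(0,1)$ by Assumption~\ref{ass:Sampling}, which delivers the stated result.

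The honest assessment is that there is no real obstacle here: the substantive content has been front-loaded into the two high-level assumptions, and the proof reduces to one identity plus a direct invocation of the assumed CLT. The only matters worth checking are bookkeeping ones — that $\widehat{y}_{i\tau+h}(0)$ being a measurable function of $\{y_{it}\}_{t\leq\tau}$ makes each $u_{i\tau+h}$ a well-defined random variable, and that the finite-variance qualification $\bar{\sigma}_n^2<\infty$ in Assumption~\ref{ass:Sampling} is what legitimizes the studentization. The genuinely difficult work lies not in this lemma but in \emph{verifying} that the average-unbiasedness requirement of Assumption~\ref{Unbiasedness} is actually satisfied by the proposed basis-function forecasts; that is where I would expect the real effort to go, and it is deferred to Theorems~\ref{th:Unbiasedness1} and~\ref{th:Unbiasedness2}.
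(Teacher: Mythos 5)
Your proof is correct and follows essentially the same route as the paper's: both use Assumption~\ref{Unbiasedness} to replace $\mathbb{E}\left[y_{i\tau+h}(0)\right]$ with $\mathbb{E}\left[\widehat{y}_{i\tau+h}(0)\right]$ in the centering, rewrite $\sqrt{n}(\widehat{\rm FAT}_h-{\rm ATT}_h)$ as $\frac{1}{\sqrt{n}}\sum_i(u_{i\tau+h}-\mathbb{E}u_{i\tau+h})$, and then invoke Assumption~\ref{ass:Sampling} directly. Your closing remarks about where the real work lies are also consistent with the paper's structure.
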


In the remainder of the paper, we provide low-level sufficient assumptions, including a full description of the class of DGPs for the counterfactuals $y_{it}(0)$ and the forecast methods which satisfy Assumption \ref{Unbiasedness}.

\subsection{Unbiased forecasts of counterfactuals} \label{WeightedAverage}

In this section, we characterize the class of DGPs for the counterfactuals. The need to discuss the DGP for counterfactuals arises because of the lack of an untreated group, which means that we must rely on forecasting counterfactuals from pre-treatment observations. 

\subsubsection{Stationary or stochastic trends DGPs}

In this section, we consider a class of DGPs such that Assumption \ref{Unbiasedness} is satisfied generally, namely, by any forecast that can be written as a weighted average of pre-treatment outcomes with weights summing to 1. The DGPs in this class express the counterfactual as the sum of potentially two unobserved stochastic components. This includes a variety of processes, such as stationary and non-stationary (unit root) ARMA processes with individual-specific parameters.

\begin{assumption}[Stationary or stochastic trends DGPs]
\label{ass:Series} 
    The counterfactual $y_{it}(0)$ is:
    \begin{align}
    y_{it}(0) & = y_{it}^{(1)}(0)+ y_{it}^{(2)}(0),\label{sumprocess}
    \end{align}
    where $y_{it}^{(1)}(0)$ is an unobserved mean-stationary process (i.e., $\mathbb{E}y_{it}^{(1)}(0)$ is constant over t)  and $y_{it}^{(2)}(0)=y_{it-1}^{(2)}(0)+u_{it}(0)$ is an unobserved random walk process with $\mathbb{E}u_{it}(0)=0$ for all $t\geq2$. Either/both components could be zero.
\end{assumption}

\begin{rem} Assumption \ref{ass:Series} does not require both components to be present, which means that it accommodates stationarity as well as non-stationarity due to a stochastic trend. The user does not need to take a stance on the component(s). Our method is robust to both. When both components in Assumption \ref{ass:Series} are present, the assumption is equivalent to the classical trend-cycle decomposition of macroeconomic time series with stochastic trends (e.g., \cite{NelsonPlosser1982, Watson1986}).  
\end{rem}
\begin{rem} This class of DGPs is a plausible assumption for applications where either (1) the time series of pre-treatment outcomes does not display a trend; (2) there is a trend in pre-treatment outcomes that is plausibly stochastic (not deterministic); or (3) there is only one pre-treatment observation so a deterministic trend could never be modeled anyway. We consider processes with deterministic trends in pre-treatment outcomes in the next section. 

\end{rem}

 A key insight of this paper is that a correctly specified parametric model (e.g., a specific ARIMA model) is not necessary to obtain unbiased forecasts of the counterfactuals. In fact, as the next result shows, any forecast expressed as a weighted average of pre-treatment observations satisfies the unbiasedness condition.

\begin{thm}[Unbiasedness for stationary or stochastic trends DGP]
    \label{th:Unbiasedness1} 
    Let Assumption \ref{ass:Series} hold. Denote by ${\cal T}_{i}=\left\{ \tau-R_{i}+1,\ldots,\tau\right\}$ the set of $R_{i}$ time periods directly preceding the treatment date. Consider a weighted average of the pre-treatment outcomes:
\begin{align}
    \widehat{y}_{i\tau+h}(0)=\sum_{t\in{\cal T}_{i}}w_{it}y_{it},\label{weightavg}
\end{align} 
where $w_{it}$ are non-random weights such that $\sum_{t\in{\cal T}_{i}}w_{it}=1$. 
    Then, 
    \begin{align}\label{strongunb}
    \mathbb{E}\left[\widehat{y}_{i\tau+h}(0)-y_{i\tau+h}(0)\right] & =0.
    \end{align}
   
\end{thm}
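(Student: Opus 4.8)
The plan is to establish the stronger, individual-level statement \eqref{strongunb} directly; the cross-sectional average required by Assumption~\ref{Unbiasedness} then follows by averaging over $i$. The entire argument rests on a single observation: under Assumption~\ref{ass:Series} the \emph{unconditional mean} of the counterfactual $y_{it}(0)$ does not depend on $t$, even though the process itself need not be stationary.

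First I would note that, because $\mathcal{T}_{i}$ consists of pre-treatment periods, \eqref{obsoutcome} gives $y_{it}=y_{it}(0)$ for every $t\in\mathcal{T}_{i}$, so the forecast \eqref{weightavg} is literally a weighted average of counterfactuals, $\widehat{y}_{i\tau+h}(0)=\sum_{t\in\mathcal{T}_{i}}w_{it}\,y_{it}(0)$. Substituting the decomposition \eqref{sumprocess} and taking expectations, the forecast error splits additively into a mean-stationary part and a random-walk part, so it suffices to handle each component separately.

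For the mean-stationary component, constancy of the mean gives $\mathbb{E}[y_{it}^{(1)}(0)]=\mu_{i}^{(1)}$, independent of $t$. For the random-walk component, the key step is to telescope the recursion $y_{it}^{(2)}(0)=y_{i,t-1}^{(2)}(0)+u_{it}(0)$ down to the initial condition, writing $y_{it}^{(2)}(0)=y_{i1}^{(2)}(0)+\sum_{s=2}^{t}u_{is}(0)$, and then invoke $\mathbb{E}[u_{is}(0)]=0$ to conclude $\mathbb{E}[y_{it}^{(2)}(0)]=\mathbb{E}[y_{i1}^{(2)}(0)]=:\mu_{i}^{(2)}$, again independent of $t$ (this implicitly uses that the initial condition has a finite mean, a mild regularity requirement). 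Thus the full counterfactual has time-invariant mean $\mu_{i}:=\mu_{i}^{(1)}+\mu_{i}^{(2)}$ at every date, including the forecast target $\tau+h$.

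Finally I would combine these to obtain $\mathbb{E}[\widehat{y}_{i\tau+h}(0)-y_{i\tau+h}(0)]=\sum_{t\in\mathcal{T}_{i}}w_{it}\,\mu_{i}-\mu_{i}=\mu_{i}\left(\sum_{t\in\mathcal{T}_{i}}w_{it}-1\right)=0$, where the last equality is exactly the constraint that the weights sum to one. I expect the only genuine subtlety, and the point worth emphasizing, is the random-walk step: a random walk is non-stationary and its variance diverges with $t$, so it is tempting to think its mean drifts as well, but with zero-mean innovations the unconditional mean is pinned to the initial condition and stays flat. This is precisely why unbiasedness, unlike forecast accuracy, survives the nonstationarity of a stochastic trend, and why the weights-sum-to-one condition is the only restriction the forecast must satisfy.
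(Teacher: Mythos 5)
Your proof is correct and follows essentially the same route as the paper: decompose the counterfactual into its two components, observe that each has a time-invariant unconditional mean (the paper states $\mathbb{E}\bigl(y_{it}^{(r)}-y_{i\tau+h}^{(r)}(0)\bigr)=0$ directly, where you make the random-walk telescoping explicit), and then use that the non-random weights sum to one. The only additions beyond the paper's argument are minor elaborations --- the explicit telescoping and the finite-mean regularity remark for the initial condition --- which are welcome but do not change the structure of the proof.
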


\begin{rem}
Note that Theorem \ref{th:Unbiasedness1} shows how to obtain unbiased estimates of the individual (possibly heterogeneous) treatment effects. The result in \eqref{strongunb} is stronger than the average unbiasedness required by Assumption \ref{Unbiasedness}. This means that under the assumptions of Theorem \ref{th:Unbiasedness1} one can not only obtain consistent and asymptotically normal estimates of the average treatment effects, but also unbiased estimates of the \textit{individual} treatment effects.
\end{rem}
There are many ways to obtain forecasts that are weighted averages of pre-treatment data. In this paper we focus on a general class of forecasts obtained via basis function regressions, such as polynomial time trends regressions. 

\begin{definition}[Forecasts via basis function regressions] \label{basisdef}
Consider a sequence of linearly independent functions $\{b_k(t)\}_{k=0}^{q_i}$, $q_{i}\in\left\{0,1,2,\ldots,\tau-1\right\}$, on the set $\mathcal{T}_{i}=\left\{ \tau-R_{i}+1,\ldots,\tau\right\}$ with  $R_{i}\in\left\{ q_i+1,\ldots,\tau\right\}$, and such that $b_0(t)=1$ for all $t$. For example, polynomial time trends set $b_k(t)=t^k$, with $q_{i}$ the order of the polynomial. For each individual $i$, we forecast the counterfactual via individual-specific regressions of pre-treatment outcomes $\{y_{it}\}_{t\in\mathcal{T}_i}$ on the basis functions $\{b_k(t)\}_{k=0}^{q_i}$: 
\begin{align}    
    \widehat{y}_{i\tau+h}^{(q_i,R_i)} 
    & :=\sum_{k=0}^{q_i}\widehat{c}_{ik}^{(q_i,R_i)}b_k(\tau+h),\label{generalforecast}\\
    \widehat{c_i}^{(q_i,R_i)} 
    & :=\argmin_{c\in\mathbb{R}^{q_i+1}}\sum_{t\in{\cal T}_{i}}\left(y_{it}-\sum_{k=0}^{q_{i}}c_{k}\,b_{k}\left(t\right)\right)^{2}\label{OLScoefficients},
\end{align}
where $c_i=\left(c_{i0},\ldots,c_{iq_{i}}\right)$ is a
$q_{i}+1$ vector of individual-specific coefficients.\footnote{Note that when $q_i = \tau - 1$, $R_i = \tau$, that is, all pre-treatment outcomes are used in constructing the forecast. However, fewer observations can be used. We discuss the choice of the tuning parameters $q_i$ and $R_i$ in Section \ref{choicepoly}.}
\end{definition}

This definition makes it clear that for any type of basis function the choice $q_i=0$ yields the sample mean of pre-treatment outcomes as the forecast. The following result shows that forecasts obtained via basis function regressions satisfy the weighted average requirement of Theorem \ref{th:Unbiasedness1}.

\begin{lem}
\label{ForecastWeights}
For known basis functions $\{b_k(t)\}_{k=0}^{q_i},\,q_{i}=0,1,\dots,\tau_{i}-1$
that are linearly independent on $\mathcal{T}_{i}$ with $b_{0}\left(t\right)=1$,
the forecast in equation \eqref{generalforecast} satisfies  equation \eqref{weightavg}.
\end{lem}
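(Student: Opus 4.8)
The plan is to write the basis-function forecast as an explicit linear map applied to the vector of pre-treatment outcomes, read off the implied weights, and then exploit the normalization $b_0\equiv 1$ to show that those weights sum to one. First I would set up matrix notation. Collect the pre-treatment outcomes into the vector $y_i=(y_{it})_{t\in\mathcal{T}_i}\in\mathbb{R}^{R_i}$ and form the $R_i\times(q_i+1)$ design matrix $B_i$ with entries $(B_i)_{t,k}=b_k(t)$, indexed by $t\in\mathcal{T}_i$ and $k=0,\ldots,q_i$. Let $b(\tau+h)=(b_0(\tau+h),\ldots,b_{q_i}(\tau+h))'$ denote the vector of basis functions evaluated at the forecast date. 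With this notation the OLS coefficients in \eqref{OLScoefficients} are $\widehat{c}_i=(B_i'B_i)^{-1}B_i'y_i$, so the forecast in \eqref{generalforecast} becomes $\widehat{y}_{i\tau+h}=b(\tau+h)'\widehat{c}_i=w_i'y_i$, where I define the weight vector $w_i:=B_i(B_i'B_i)^{-1}b(\tau+h)$. Since $B_i$ and $b(\tau+h)$ depend only on the (known) basis functions and time indices, the entries $w_{it}$ are non-random, and this already exhibits the forecast as a weighted average $\sum_{t\in\mathcal{T}_i}w_{it}y_{it}$, as required by \eqref{weightavg}.

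A preliminary step is to justify that $B_i'B_i$ is invertible, so that $w_i$ is well defined. Because the functions $\{b_k\}_{k=0}^{q_i}$ are linearly independent on $\mathcal{T}_i$ and the estimation window satisfies $R_i\geq q_i+1$, the columns of $B_i$ are linearly independent; hence $B_i$ has full column rank $q_i+1$ and $B_i'B_i$ is positive definite.

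The crux is the sum-to-one property, and the key structural fact is that the constant function is itself one of the regressors. Since $b_0(t)=1$ for all $t$, the first column of $B_i$ is the all-ones vector $\mathbf{1}\in\mathbb{R}^{R_i}$; that is, $\mathbf{1}=B_i e_0$, where $e_0=(1,0,\ldots,0)'$ is the first standard basis vector of $\mathbb{R}^{q_i+1}$. Then
\begin{align*}
\sum_{t\in\mathcal{T}_i}w_{it}
&=\mathbf{1}'w_i
=\mathbf{1}'B_i(B_i'B_i)^{-1}b(\tau+h)\\
&=e_0'B_i'B_i(B_i'B_i)^{-1}b(\tau+h)
=e_0'b(\tau+h)=b_0(\tau+h)=1.
\end{align*}
Equivalently, and perhaps more transparently, feeding the constant series $y_{it}\equiv 1$ into the regression returns the coefficient vector $e_0$ exactly (OLS reproduces any function in the span of the regressors), so the forecast of this constant series equals $b_0(\tau+h)=1$; since that same forecast also equals $\sum_{t}w_{it}\cdot 1$, the weights must sum to one.

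I do not anticipate a genuine obstacle here: the argument is essentially the observation that OLS reproduces the intercept exactly. The only point that requires care is the invertibility of $B_i'B_i$, which is precisely where the linear-independence hypothesis and the window constraint $R_i\geq q_i+1$ (built into Definition \ref{basisdef}) are used; everything else is linear algebra driven by the normalization $b_0\equiv 1$.
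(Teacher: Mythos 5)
Your proof is correct and follows essentially the same route as the paper's: both express the forecast as $b(\tau+h)'(B_i'B_i)^{-1}B_i'y_i$, identify the weight vector, and use the fact that the constant function $b_0\equiv 1$ makes the first column of the design matrix the all-ones vector, so that $(B_i'B_i)^{-1}B_i'\mathbf{1}=e_0$ and the weights sum to $b_0(\tau+h)=1$. If anything your version is slightly more complete, since the paper's written proof specializes to the just-identified case $R_i=q_i+1$ while you handle general $R_i\geq q_i+1$ and explicitly verify invertibility of $B_i'B_i$.
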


Our framework connects to the literature on random coefficient panel models, particularly \cite{Chamberlain1992}, \cite{GrahamPowell2012}, and \cite{ArellanoBonhomme2012}. Linear panel models with unit-specific intercepts and trend coefficients, similar to specifications in this literature, belong to the class of DGPs in Assumption~\ref{ass:Series} or Assumption~\ref{ass:Series2}. However, our goals differ: that literature studies identification of the distribution of heterogeneous coefficients, whereas we focus on forecasting untreated potential outcomes to estimate average treatment effects.

\subsubsection{Deterministic trends DGPs}

In this section, we consider an expanded class of DGPs that is appropriate for applications where: 1) there is more than one pre-treatment outcome; 2) it makes sense to model the outcomes as trending over time; 3) the trend is deterministic rather than (or in addition to) stochastic. We show that the basis function regression considered in the previous section gives unbiased forecasts of the counterfactuals, under certain conditions.

The expanded class of DGPs always includes a deterministic trend component, possibly in addition to (either or both) the stochastic components considered in Assumption \ref{ass:Series}.

\begin{assumption}[Deterministic trend DGPs]
\label{ass:Series2} 
    The counterfactual $y_{it}(0)$ is:
    \begin{align}
    y_{it}(0) & = y_{it}^{(1)}(0)+ y_{it}^{(2)}(0)+y_{it}^{(3)}(0),\label{sumprocess2}
    \end{align}
    where $y_{it}^{(1)}(0)$ and $y_{it}^{(2)}(0)$ are as in Assumption \ref{ass:Series} and $y_{it}^{(3)}(0)$ is a deterministic time trend $y_{it}^{(3)}(0)=\sum_{k=0}^{q_{0i}}c_{ik}^{(3)}{b}_{k}(t)$ with $c_{i}^{(3)}\in\mathbb{R}^{q_{0i}+1}$ and known basis functions $\{b_k(t)\}_{k=0}^{q_{0i}}$, $q_{0i}\in\left\{0,1,2,\ldots,\tau-1\right\}$.
\end{assumption}

Theorem \ref{th:Unbiasedness2} below clarifies when forecasts obtained via basis function regressions satisfy the unbiasedness assumption (Assumption \ref{Unbiasedness}) in the presence of deterministic trends.

\begin{thm}[Unbiasedness for deterministic trend DGPs]
    \label{th:Unbiasedness2} 
    Let Assumption \ref{ass:Series2} hold. Then, 
    \begin{align*}    \mathbb{E}\left[\widehat{y}_{i\tau+h}^{(q_{i},R_{i})}(0)-y_{i\tau+h}(0)\right] & =0,
    \end{align*}
    where $\widehat{y}_{i\tau+h}^{(q_{i},R_{i})}(0)$ is defined in \eqref{generalforecast}. If $y_{it}^{(3)}(0)$ is not zero, it must be that $q_{i} \geq q_{0i}$.
\end{thm}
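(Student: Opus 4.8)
The plan is to use the linearity of the ordinary least squares forecast in the outcome data to reduce the statement to the two facts already available: the exact in-sample reproduction of functions lying in the span of the regressors, and the unbiasedness result of Theorem \ref{th:Unbiasedness1}. Since for $t\le\tau$ the observed outcome equals the counterfactual, I would write $y_{it}=y_{it}^{(1)}(0)+y_{it}^{(2)}(0)+y_{it}^{(3)}(0)$ on $\mathcal{T}_i$ and note that the map $\{y_{it}\}_{t\in\mathcal{T}_i}\mapsto \widehat{y}_{i\tau+h}^{(q_i,R_i)}(0)$ defined by \eqref{generalforecast}--\eqref{OLScoefficients} is linear, with coefficients well defined because the basis functions are linearly independent on $\mathcal{T}_i$ and $R_i\ge q_i+1$. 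Hence the forecast decomposes additively into the forecasts obtained by applying the same regression separately to each of the three components, and by Lemma \ref{ForecastWeights} each of these is a weighted average of the corresponding component with one common set of non-random weights $w_{it}$ summing to one.

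For the deterministic part, the key step is that $y_{it}^{(3)}(0)=\sum_{k=0}^{q_{0i}}c_{ik}^{(3)}b_k(t)$ lies in the span of the regressors $\{b_k\}_{k=0}^{q_i}$ precisely because $q_i\ge q_{0i}$. Regressing an in-span function on the basis returns its coefficients exactly (the OLS residuals vanish), so the fitted coefficients are $c_{ik}^{(3)}$ for $k\le q_{0i}$ and zero for $q_{0i}<k\le q_i$, and the associated forecast at $\tau+h$ equals $\sum_{k=0}^{q_{0i}}c_{ik}^{(3)}b_k(\tau+h)=y_{i\tau+h}^{(3)}(0)$. Thus the forecast error contributed by the deterministic component is identically zero, not merely zero in expectation. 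Equivalently, the weights satisfy the reproducing identities $\sum_{t\in\mathcal{T}_i}w_{it}b_k(t)=b_k(\tau+h)$ for every $k\le q_i$, which immediately annihilates the deterministic trend.

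For the two stochastic parts, $y_{it}^{(1)}(0)+y_{it}^{(2)}(0)$ is exactly a process of the form covered by Assumption \ref{ass:Series}, and the forecast applied to it is the weighted average $\sum_{t\in\mathcal{T}_i}w_{it}\big(y_{it}^{(1)}(0)+y_{it}^{(2)}(0)\big)$ with non-random weights summing to one. Theorem \ref{th:Unbiasedness1} then gives $\mathbb{E}\big[\sum_{t\in\mathcal{T}_i} w_{it}(y_{it}^{(1)}(0)+y_{it}^{(2)}(0))-(y_{i\tau+h}^{(1)}(0)+y_{i\tau+h}^{(2)}(0))\big]=0$. Adding the identically-zero deterministic error to this zero-mean stochastic error yields $\mathbb{E}[\widehat{y}_{i\tau+h}^{(q_i,R_i)}(0)-y_{i\tau+h}(0)]=0$, as claimed. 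I expect the only real content to be the exact-reproduction step for the trend, i.e.\ verifying that $q_i\ge q_{0i}$ forces the OLS fit to pass the deterministic component through unchanged; everything else is a bookkeeping reassembly of Lemma \ref{ForecastWeights} and Theorem \ref{th:Unbiasedness1}. The one point I would check carefully is that the same weights $w_{it}$ serve all three components simultaneously, which holds because the weights depend only on the regressors and $\mathcal{T}_i$, not on the realized outcomes.
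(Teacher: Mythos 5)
Your proposal is correct and follows essentially the same route as the paper's proof: decompose the forecast error component-by-component, handle the stationary and random-walk parts via the weighted-average/weights-sum-to-one argument of Theorem \ref{th:Unbiasedness1} and Lemma \ref{ForecastWeights}, and observe that the deterministic trend is reproduced exactly (with identically zero error, not merely zero in expectation) because $q_i \geq q_{0i}$ places it in the span of the regressors. The only cosmetic difference is that you invoke Theorem \ref{th:Unbiasedness1} as a black box for the stochastic components, whereas the paper repeats that short argument inline.
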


\begin{rem} While the stochastic components of the counterfactual in Assumption \ref{ass:Series2} are unobserved and do not have to be present, if the deterministic time trend component is present, it must be a function of the same basis functions used to obtain the forecast. This implies that the stochastic component of the DGP does not need to be correctly specified, but, if a deterministic time trend component is present, it is correctly specified up to the order (as we discuss in the next remark).  
This is an unusual result from the perspective of forecasting, where one typically focuses on specifying both stochastic and deterministic parts of a model.
\end{rem}

\begin{rem} The key requirement of Theorem \ref{th:Unbiasedness2} is that $q_i$ (the number of basis functions used in estimation) be greater than or equal to $q_{0i}$ (the true number of basis functions), when the DGP has a deterministic time trend component. Intuitively, this means that, in the presence of deterministic trends, choosing a too small number of basis functions runs the risk of delivering biased forecasts of the counterfactuals. We discuss the practical implications of these findings when discussing the choice of tuning parameters in Section \ref{choicepoly}.
 \end{rem}

\begin{rem} \label{NCcommonshock} It may be possible to weaken the assumption that there is no forecastable common shock between $\tau$ and $\tau+h, h\geq 1$. To see this, let $\widetilde{y}_{it}\left(0\right)$ follow either Assumption \ref{ass:Series} or \ref{ass:Series2} and let $W_{it}$ be an observed variable that is affected by a common shock $\gamma_{t}$ but not by the treatment. Suppose that 
\begin{align}
    y_{it}\left(0\right) &= \gamma_{t}1\left(t>\tau\right)+\widetilde{{y}}_{it}\left(0\right),\\
    W_{it}	&= \theta_{i}^{W}\gamma_{t}1\left(t>\tau\right)+\epsilon_{it}.
\end{align}
Denote by $\widehat{\gamma}_t$ an estimator of $\gamma_t$ obtained from post-treatment data on $W_{it},$ $t>\tau$ via, e.g., PCA. Then 
\begin{align}
    \widehat{{\rm FAT}}_{h}&= \frac{1}{n}\sum_i \left(y_{i\tau+h}-\widehat{\widetilde{y}}_{i\tau+h}\left(0\right)\right)-\widehat{\gamma}_{\tau+h},
\end{align}
where $\widehat{\widetilde{y}}_{i\tau+h}(0)$ is the counterfactual forecast obtained as before via basis function regression.

Leveraging auxiliary information to partial out the effect of unobserved aggregate shocks is conceptually related to  \cite{WhiteKennedy2009, Freyaldenhoven2019, BrownBW2023}, among others.
\end{rem}
 
\subsection{Choice of basis functions and tuning parameters}\label{choicepoly}
Our proposed method for forecasting counterfactuals in Definition \ref{basisdef} requires choosing: 1) the number of pre-treatment periods used for the estimation $R_i$; 2) the type of basis functions $b_k(t)$; and 3) the number of basis functions $q_i$. We discuss how these choices affect inference and offer some practical recommendations for empirical researchers.

The choice of estimation window, $R_i$, involves a trade-off. In our class of DGPs, a larger $R_i$ generally yields an estimator with smaller variance, but a shorter $R_i$ can guard against violations of our assumptions stemming from parameter instability in pre-treatment data. While plotting pre-treatment time series offers informal guidance on such instability, applying our method to long-T settings reveals a heightened sensitivity to the chosen estimation window. This is because the increased likelihood of structural changes and parameter instability over extended periods becomes a critical concern. This challenge resonates with discussions in the regression discontinuity literature regarding bandwidth selection (e.g., \citealt{GelmanImbens2019}).

Regarding the choice of basis functions, Theorems \ref{th:Unbiasedness1} and \ref{th:Unbiasedness2} imply that this choice only matters for unbiasedness when the DGP has a deterministic time trend, in which case the basis functions need to be correctly specified to ensure unbiasedness (up to the order). When the DGP is mean stationary or has a stochastic trend, the choice of basis functions does not matter for unbiasedness. Basis functions may be chosen based on the time series properties of pre-treatment outcomes. Polynomial time trends seem to be a natural choice of basis functions for DGPs with deterministic trends.\footnote{In applications using difference-in-differences (DiD) methods it is typical to assume the presence of time trends (mostly linear) that are common between control and treatment groups. Our results make it clear that a linear time trend can only be dealt with by either using an untreated group or, when an untreated group is not available, by using (at least two) pre-treatment time periods to model the trend (which leads to our polynomial regression).} Other basis functions could be used, e.g., periodicity could be captured by Fourier basis functions. Our practical recommendation, and what we focus on henceforth, is to consider polynomial basis functions by letting $b_k(t)=t^k$ in Definition \ref{basisdef}.\footnote{An advantage of polynomial basis functions is that it is in principle possible to relax the assumption that the basis functions are correctly specified by instead assuming that $y_{it}^{(3)}(0)$ in Assumption \ref{ass:Series2} is a continuous (but otherwise unspecified) function of time. Since time in our setting is defined on a compact interval and the deterministic trend is a continuous function, $y_{it}^{(3)}(0)$ can be approximated arbitrarily well by a polynomial in time. In fact, by the Weierstrass Approximation Theorem, the approximation error approaches zero as the order of the polynomial goes to infinity. Under additional smoothness assumptions on the deterministic trend, an approximation theorem could then be used (e.g. the Polynomial Approximation Error Theorem) to derive a bound on the approximation error of $y_{it}^{(3)}(0)$ by the polynomial regression. The forecast of $y_{i\tau+h}(0)$ is biased, but we conjecture that it may be possible to do bias correction given an expression for the bias obtained via the approximation theorem.}

Regarding the choice of the order of the basis functions $q_i$ used for the estimation, this only matters for unbiasedness if the DGP has a deterministic trend component. For DGPs without such a trend, any $q_i$ ensures unbiasedness, with $q_i=0$ offering the lowest variance; however, larger $q_i$ can mitigate bias from non-stationary initial conditions. If a deterministic trend exists, $q_i$ cannot be smaller than the unknown true order, forcing a trade-off where larger $q_i$ for unbiasedness risks higher variance. This choice is inherently constrained by the number of pre-treatment periods ($T$): larger $T$ allows for more flexible polynomial forecasts and extensive placebo testing, yet estimation consumes periods from this budget; conversely, smaller $T$ drastically limits both model flexibility and the scope for placebo tests.

Standard cross-validation methods typically target predictive accuracy, not the unbiasedness crucial for our context. While bias-targeting cross-validation is a promising avenue for selecting $q_i$, its implementation faces challenges in our short-T panel context. It inherently requires long pre-treatment series for data splitting, which is often not feasible. Moreover, even with long-T data, this can heighten concerns about parameter instability (as stationarity becomes less defensible over extended periods). Therefore, our practical recommendation is to report results for a small range of $q_i$ values (e.g., $0,1,2,3$), with pre-treatment time series plots providing informal guidance.

\subsection{Individual treatment timing and limited anticipation} 
\label{stagger}

The treatment timing $\tau_i$ can be individual-specific. While staggered adoption designs typically allow for identification via comparisons with not-yet-treated units, such strategies are inherently limited to the window before the last unit becomes treated. In settings where all units eventually adopt the treatment, such as the universal rollout of a national policy, standard event-study designs cannot recover treatment effects for the periods following full adoption, as no valid comparison group remains.

Our approach overcomes this limitation. Because FAT constructs counterfactuals using only the treated unit's own pre-treatment history, it does not rely on the existence of a contemporaneous control group. Consequently, it allows researchers to estimate treatment effects over longer horizons, extending into periods after all units have been treated.

We require that the treatment timing be exogenous with respect to the counterfactual outcome path $y_{it}(0)$, conditional on the unit's history used for forecasting. The basis functions $\{b_k(t)\}$ in Definition \ref{basisdef} are then functions of the time to adoption $t - \tau_i$. Additionally, it is possible to allow for treatment anticipation, as long as it is limited. In this case, one can simply modify the pre-treatment estimation window ${\cal T}_{i}$ in Definition \ref{basisdef} to include observations only up to the time $\tau_i-\delta_i$ at which it is still reasonable to assume that there was no treatment anticipation, that is, ${\cal T}_{i} \equiv \left\{ \tau_i-\delta_i-R_{i}+1,\ldots,\tau_i-\delta_i\right\}$ (and $h$ is adjusted accordingly).

\subsection{Balanced panels and pooled estimation} 
\label{BalancedPanel}
For balanced panels, our proposed estimator is algebraically equivalent to two simpler pooled estimation strategies. These alternative approaches are also suitable for repeated cross-section data or when cohort of birth plays the role of time. However, in unbalanced panels, these simpler strategies may yield inconsistent estimates for ATT. Therefore, our exposition primarily focuses on individual-level estimators, given their ability to accommodate unbalanced panels; these same pooled estimators, however, are how the FAT is obtained in balanced panels.

Assume that $R=R_{i}$ and $q=q_{i}$ are constant across $i$ and focus on polynomial basis functions in Definition \ref{basisdef}. The
first alternative way to obtain our proposed estimator is to consider the cross-sectional averages $\overline{y}_{t}=\frac{1}{n}\sum_{i=1}^{n}y_{it}$
of the observed outcomes in time period $t$. Due to linearity of
the forecasting procedure, we can rewrite:
\begin{align}
\widehat{{\rm FAT}}_{h} & =\overline{y}_{\tau+h}-\sum_{k=0}^{q}\overline{\alpha}_{k}\,(\tau+h)^{k}, & \overline{\alpha} & :=\argmin_{\alpha\in\mathbb{R}^{q+1}}\sum_{t\in{\cal T}}\left(\overline{y}_{t}-\sum_{k=0}^{q}\alpha_{k}\,t^{k}\right)^{2},\label{Alternative1}
\end{align}
where ${\cal T}=\{\tau-R+1,\ldots,\tau\}$, and we suppress the dependence
on $\tau$, $q$, $R$. Here, the cross-sectional averages for $t\leq\tau$
are used to obtain a forecast of the average counterfactual for $t=\tau+h$, which is then subtracted
from the cross-sectional average observed at that time.

The second alternative is to consider
a pooled regression estimator, $\widehat{{\rm FAT}}_{h}=\widehat{\beta}_{h}$,
where 
\begin{align}    \left(\widehat{\beta},\widehat{\alpha}\right) & =\argmin_{\big\{\beta\in\mathbb{R}^{h},\,\alpha\in\mathbb{R}^{n\times(q+1)}\big\}}\,\sum_{i=1}^{n}\sum_{t=\tau-R+1}^{\tau+h}\left(y_{it}-\sum_{k=1}^{h}{1}\{t=\tau+k\}\,\beta_{k}-\sum_{k=0}^{q}\alpha_{ik}\,t^{k}\right)^{2},\label{Alternative2}
\end{align}
which is the OLS estimator obtained from regressing $y_{it}$ on a
set of time dummies $1(t=\tau+k)$, for $k\in\{1,\ldots,h\}$,
and individual-specific time trends.\footnote{It actually does not matter for $\widehat{\beta}$ whether the coefficients $\alpha$ on the time trend are individual-specific.}

\section{Parametric-Model FAT}\label{sec:Covariates}

In this section, we consider an alternative approach that explicitly incorporates covariates, including lagged outcomes, in the estimation of FAT. Compared to the baseline case, this involves specifying a parametric model for the counterfactuals and imposing additional assumptions. 

\subsection{Homogeneous parameters\label{subsec:Homogenous}}

Consider the following parametric model for the counterfactual $y_{it}(0)$:
\begin{align}
y_{it}(0) & =x_{it}'\,\beta+\sum_{k=0}^{q_{i}}\,c_{ik}t^{k}+\varepsilon_{it},\label{modelChoice}
\end{align}
where $x_{it}\in\mathbb{R}^{\dim x_{it}}$ is a vector of
covariates, $\beta\in\mathbb{R}^{\dim x_{it}}$ is restricted to be homogeneous across individuals, $c_{ik} \in \mathbb{R}$ are unknown parameters and $\varepsilon_{it}\in\mathbb{R}$ is such that
\begin{align}
\mathbb{E}\left[\varepsilon_{it}\,\big|x_{it},x_{it-1},\ldots,\varepsilon_{it-1},\varepsilon_{it-2}\right]=0.\label{modelChoice2}
\end{align}

Assuming that we have estimates $\widehat{\beta}$ for the common parameter that are consistent as $n\rightarrow\infty$
under correct model specification\footnote{For example, when $q_{i}=0$ and $x_{it}=(y_{it-1},z'_{it})'$, a consistent estimator for $\beta=(\rho,\theta')'$ can be obtained by applying
an IV regression to the first-differenced model 
\begin{align*}
y_{it}-y_{it-1} & =\left[y_{it-1}-y_{it-2}\right]\,\rho+\left[z_{it}-z_{it-1}\right]'\theta+\varepsilon_{it}-\varepsilon_{it-1},
\end{align*}
using, for example, $y_{it-2}$ and $z_{it-1}$ as instruments. In the Monte Carlo simulations, we further extend this case to $q_{i}=1$.} leads to the \textit{parametric-model} FAT, defined as:
\begin{align}
    \widehat{{\rm FAT}}_{h}^{\rm MB}=\frac{1}{n}\sum_{i=1}^{n}\left[y_{i\tau+h}
     -\widehat{y}_h(\widehat{\beta},y_i,x_i)\right],\label{MB_FAT}
\end{align}
where $\widehat{y}_h(\widehat{\beta},y_i,x_i)$ is the forecast obtained as:
\begin{align}
    \widehat{y}_h(\widehat{\beta},y_i,x_i) & :=x_{i\tau+h}'\,\widehat{\beta}+\sum_{k=0}^{q_{i}}(\tau+h)^{k}\,\widehat{c}_{ik}^{(q_i,R_i)}(\widehat{\beta}),\label{eq:MB_forecast}\\
    \widehat{c}_{i}^{(q_i,R_i)}(\widehat{\beta}) & :=\argmin_{c\in\mathbb{R}^{q_i+1}}\sum_{t\in{\cal T}_i}\left(y_{it}-x_{it}'\,\widehat{\beta}-\sum_{k=0}^{q_i}t^{k}\,c_{k}\right)^{2},\label{MB_coeffs}
\end{align}
where $c_{i}=(c_{i,0},\ldots,c_{i,q_i})$ is a $q_i+1$ vector, and ${\cal T}_i=\{\tau-R_i+1,\ldots,\tau\}$
is the set of the $R_i$ time periods directly preceding the treatment
date. The parameters $q_i$ and $R_i\in\{q_i+1,\ldots,\tau\}$
are chosen by the researcher.

\begin{thm} \label{AsyNofMBFAT}
   Consider $\widehat{{\rm FAT}}_{h}^{\rm MB}$ in \eqref{MB_FAT}. Assume that
   \begin{enumerate}[(i)]
   
      \item The forecast is unbiased when evaluated at the true parameter value $\beta_0$, i.e., $$\mathbb{E}\left[\widehat{y}_h(\beta_0,y_i,x_i)
      - y_{i\tau+h}\left(0\right) \right] = 0.$$
   
      \item The function $\widehat{y}_h(\beta,y_i,x_i)$
       is twice continuously differentiable
       such that
       $\frac{\partial \widehat{y}_h(\beta_0,y_i,x_i)} {\partial \beta}$
      has finite second moments, and
       for some $\delta>0$ we have
       $$R_n := \sup_{\left\{ \beta \, : \, \left\| \beta - \beta_0 \right\| \leq \delta \right\}}
     \left\| \frac{1}{n} \sum_{i=1}^{n} 
     \frac{\partial^2 \widehat{y}_h(\beta,y_i,x_i)} {\partial \beta \partial \beta'} \right\| = o_P(n^{1/2}).$$

      \item The estimator $\widehat{\beta}$ satisfies 
     \begin{equation}
         \widehat{\beta} - \beta_0 = \frac 1 n \sum_{i=1}^n \psi(y_i,x_i) + r_n,
     \end{equation} 
      where $\psi(y_i,x_i) $ has zero mean and finite variance, and $r_n= o_P(n^{-1/2})$. 
      Together with assumption (i) this implies that
      $\widehat{\beta} - \beta_0 =O_P(n^{-1/2})$.

      \item The sequence of random variables
      \begin{equation}
          u^*_{i\tau+h} :=  y_{i\tau+h}
       -\widehat{y}_h(\beta_0,y_i,x_i)
       - \frac{1}{n}\sum_{j=1}^n\mathbb{E}\left[\frac{\partial \widehat{y}_h(\beta_0,y_j,x_j)} {\partial \beta'}\right]
        \psi(y_i,x_i) \label{ustar}
      \end{equation} 
      satisfies a CLT in the sense that
      $$\frac{\frac{1}{\sqrt{n}}\sum_i \left(u^*_{i\tau+h}-\mathbb{E}u^*_{i\tau+h}\right)}{\bar{\sigma}^{*}_n}\Rightarrow{\cal N}\left(0,1\right), \quad \bar{\sigma}^{*2}_n := {\rm Var}\left(\frac{1}{\sqrt{n}}\sum_i u^*_{i\tau+h} \right)<\infty.$$
        
   \end{enumerate}
    Then we have that $\sqrt{n}\frac{\widehat{{\rm FAT}}_{h}^{\rm MB}-{\rm ATT}_{h}}{\bar{\sigma}^{*}_n}\Rightarrow{\cal N}\left(0,1\right)$.\footnote{
Estimation of the variance $(\bar{\sigma}^{*}_n)^2$
is discussed in Appendix~\ref{app:VarEstimation}.
    }
\end{thm}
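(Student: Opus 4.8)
The plan is to linearize $\widehat{{\rm FAT}}_h^{\rm MB}$ in $\widehat\beta$ around the true value $\beta_0$, reduce the problem to the single normalized sum $\frac{1}{\sqrt n}\sum_i(u^*_{i\tau+h}-\mathbb E u^*_{i\tau+h})$ up to an $o_P(1)$ term, and then invoke the CLT in assumption (iv) together with Slutsky's theorem. First I would expand $\widehat y_h(\widehat\beta,y_i,x_i)$ about $\beta_0$ using the exact (integral) form of Taylor's theorem, valid by the twice-differentiability in (ii): $\widehat y_h(\widehat\beta,y_i,x_i)=\widehat y_h(\beta_0,y_i,x_i)+\frac{\partial\widehat y_h(\beta_0,y_i,x_i)}{\partial\beta'}(\widehat\beta-\beta_0)+\rho_i$, with remainder $\rho_i=\int_0^1(1-s)(\widehat\beta-\beta_0)'\frac{\partial^2\widehat y_h(\beta_0+s(\widehat\beta-\beta_0),y_i,x_i)}{\partial\beta\partial\beta'}(\widehat\beta-\beta_0)\,ds$. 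I would use the integral rather than the pointwise mean-value form precisely so that, after averaging over $i$, the Hessian is evaluated at the common point $\beta_0+s(\widehat\beta-\beta_0)$ for each fixed $s$, which is exactly the object controlled by (ii). Averaging and scaling by $\sqrt n$ then splits $\sqrt n\,\widehat{{\rm FAT}}_h^{\rm MB}$ into the infeasible sum $\frac{1}{\sqrt n}\sum_i[y_{i\tau+h}-\widehat y_h(\beta_0,y_i,x_i)]$, a first-order term $-\sqrt n\,\bar D_n(\widehat\beta-\beta_0)$ with $\bar D_n:=\frac1n\sum_i\partial\widehat y_h(\beta_0,y_i,x_i)/\partial\beta'$, and the averaged remainder $\frac1n\sum_i\rho_i$.

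Next I would dispatch the remainder and the first-order term. On the event $\{\|\widehat\beta-\beta_0\|\le\delta\}$, whose probability tends to one because $\widehat\beta-\beta_0=O_P(n^{-1/2})$ by (iii), each convex combination $\beta_0+s(\widehat\beta-\beta_0)$ lies in the $\delta$-ball, so the averaged Hessian has norm at most $R_n=o_P(n^{1/2})$ from (ii); hence $|\frac1n\sum_i\rho_i|\le\frac12 R_n\|\widehat\beta-\beta_0\|^2=o_P(n^{1/2})\,O_P(n^{-1})=o_P(n^{-1/2})$, so its $\sqrt n$-scaled contribution is $o_P(1)$. For the first-order term I would replace the sample average $\bar D_n$ by its mean $D_n:=\frac1n\sum_i\mathbb E[\partial\widehat y_h(\beta_0,y_i,x_i)/\partial\beta']$: since $\bar D_n-D_n=o_P(1)$ by a weak LLN (justified by the finite second moments in (ii)) and $\widehat\beta-\beta_0=O_P(n^{-1/2})$, the replacement error is $o_P(n^{-1/2})$. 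Substituting the influence-function representation of (iii) gives $\sqrt n\,D_n(\widehat\beta-\beta_0)=D_n\frac{1}{\sqrt n}\sum_i\psi(y_i,x_i)+\sqrt n\,D_n r_n$, where the last term is $o_P(1)$ because $\sqrt n\,r_n=o_P(1)$ and $D_n=O(1)$.

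Collecting terms identifies $\sqrt n\,\widehat{{\rm FAT}}_h^{\rm MB}=\frac{1}{\sqrt n}\sum_i u^*_{i\tau+h}+o_P(1)$, because $D_n\frac{1}{\sqrt n}\sum_i\psi(y_i,x_i)=\frac{1}{\sqrt n}\sum_i\big(\frac1n\sum_j\mathbb E[\partial\widehat y_h(\beta_0,y_j,x_j)/\partial\beta']\big)\psi(y_i,x_i)$, which is exactly the correction subtracted in the definition \eqref{ustar} of $u^*_{i\tau+h}$. Finally I would compute $\mathbb E u^*_{i\tau+h}=\mathbb E[y_{i\tau+h}]-\mathbb E[\widehat y_h(\beta_0,y_i,x_i)]$, since $\psi$ has zero mean by (iii); assumption (i) then yields $\mathbb E[\widehat y_h(\beta_0,y_i,x_i)]=\mathbb E[y_{i\tau+h}(0)]$, whence $\frac1n\sum_i\mathbb E u^*_{i\tau+h}={\rm ATT}_h$. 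Therefore $\sqrt n(\widehat{{\rm FAT}}_h^{\rm MB}-{\rm ATT}_h)=\frac{1}{\sqrt n}\sum_i(u^*_{i\tau+h}-\mathbb E u^*_{i\tau+h})+o_P(1)$, and dividing by $\bar\sigma^*_n$ and applying the CLT (iv) with Slutsky gives the result. I expect the main obstacle to be the joint handling of the quadratic remainder and the $\bar D_n\to D_n$ replacement: one must argue via the integral form that the uniform Hessian bound in (ii) is applicable despite $\widehat\beta$ being random, confirm that $\widehat\beta$ lies in the $\delta$-ball with probability approaching one, and ensure the residual $o_P(1)$ divided by $\bar\sigma^*_n$ remains negligible, which implicitly needs $\bar\sigma^*_n$ bounded away from zero.
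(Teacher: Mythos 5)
Your proposal is correct and follows essentially the same route as the paper's proof: a Taylor expansion of $\widehat{y}_h(\beta,y_i,x_i)$ around $\beta_0$, control of the quadratic remainder via $R_n\|\widehat\beta-\beta_0\|^2=o_P(n^{-1/2})$, replacement of the sample average of the gradient by its expectation, substitution of the influence-function representation of $\widehat\beta-\beta_0$ to recognize $u^*_{i\tau+h}$, and an application of the CLT in (iv). Your use of the integral form of the Taylor remainder and your explicit remark that the argument implicitly requires $\bar\sigma^*_n$ bounded away from zero are minor refinements of details the paper leaves implicit, not a different approach.
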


When the covariate vector includes lagged outcomes (e.g., $x_{it} = y_{i,t-1}$), the forecast in \eqref{eq:MB_forecast} requires these lagged values to be observed. For $h=1$, this poses no problem since $y_{i\tau}$ is observed. For $h \geq 2$, however, the required intermediate counterfactuals $y_{i,\tau+1}(0), \ldots, y_{i,\tau+h-1}(0)$ are not observed under treatment, so iterated forecasting is not directly feasible. One solution is to adopt a local-projection approach that directly models the $h$-step-ahead relationship between $y_{i,\tau+h}(0)$ and $x_{i\tau}$, analogous to \cite{Jorda2005}. Alternatively, one may use the Unobserved-Components FAT from Section~\ref{sec:Baseline-model}, which does not require dynamic covariates and remains feasible at all horizons.

\subsection{Heterogeneous parameters}

If the model for the counterfactuals is an AR(p) with heterogeneous parameters, the time series literature (e.g., \citealt{FullerHasza,Dufour1984}) has derived conditions
under which forecasts from an individual AR(p) model are unbiased. The
assumptions are stationarity of the initial condition and symmetry
of the error term. 

A second example is strictly exogenous covariates with heterogeneous coefficients. For example, suppose that the $h=1$ period forecast of $y_{i\tau+1}\left(0\right)$ is given by 
\begin{align}
    &  & \widehat{y}_{i\tau+1}^{\left(q_{i},R_{i}\right)} & :=\sum_{k=0}^{q_{i}}\widehat{c}_{k}^{\left(q_{i},R_{i}\right)}\,\left(\tau+1\right){}^{k}+\hat{\beta}^{\left(i\right)}x_{i\tau+1},\label{eq:UC_forecast-1} \\
    &  & \left(\widehat{c}^{\left(q_{i},R_{i}\right)},\hat{\beta}^{\left(i\right)}\right) & :=\argmin_{\alpha\in\mathbb{R}^{q_{i}+1},\beta\in\mathbb{R}^{x}}\sum_{t\in{\cal T}_{i}}\left(y_{it}-\sum_{k=0}^{q_{i}}c_{k}\,t^{k}-\beta x_{it}\right)^{2}.\label{eq:UC_coeffs-1}
\end{align}
The forecast \eqref{eq:UC_forecast-1} is unbiased provided that a Vandermonde
matrix which includes functions of $\left(\tau-R_{i}+s\right)^{j},j=0,\dots,q_{i},s=1,2,\dots,R_{i}$
and the covariates $x_{i\tau-R_{i}+s},s=1,2,\dots,R_{i},$ is
invertible. This condition constrains how
the covariates can change over time.

\section{Simulation studies}
\label{sec:mc}

In this section, we study the finite-sample performance of our estimators. Although we focus on a universal-treatment setting, the insights from the simulations carry over unchanged to staggered adoption, as linearity of the forecasting operator ensures unbiased individual forecasts aggregate to unbiased cohort forecasts.

In Section~\ref{subsec:mc-pr-mb}, we compare the Unobserved-Components FAT (UC FAT) to the Parametric-Model FAT (PM FAT) across forecast horizons, sample sizes, autoregressive persistence, and alternative initial-condition regimes (stationary vs.\ nonstationary). PM FAT can inherit small-sample biases from first-stage estimation of the AR parameter, whereas UC FAT avoids these by construction; the contrast is most pronounced under high persistence and small $n$. When initial conditions are stationary, finite-sample biases are generally small across forecast horizons.

Section~\ref{indicator} focuses on UC FAT and evaluates how tuning parameter choices - polynomial order $q$ and pre-treatment window $R$ - trade off bias and variance under DGPs satisfying Assumptions~\ref{ass:Series} and~\ref{ass:Series2}. In the small-$T$ setting we consider, the simulations support the recommendation to use the lowest polynomial order that accommodates any deterministic trend in the outcome and, conditional on that choice, select the largest feasible pre-treatment window. When the trend order is underfit, bias emerges; in that case, shortening the pre-treatment window can mitigate misspecification. Once the polynomial order is rich enough, bias is largely insensitive to $(q,R)$, and precision improves with lower order and longer windows. These patterns are robust across DGPs with heterogeneous autoregressive parameters and individual-specific trends, and they provide clear guidance for empirical implementation.

\subsection{Performance across forecast horizons}\label{subsec:mc-pr-mb}

We consider the following DGP for the counterfactual outcome:
\[
y_{it}(0) \;=\; y^{(1)}_{it}(0) + y^{(3)}_{it}(0), \qquad t=1,\dots,T,
\]
where the autoregressive component has a homogeneous AR parameter
\[
y^{(1)}_{it}(0) = \mu_i + \rho\, y^{(1)}_{i,t-1}(0) + u_{it}, \quad
\mu_i \sim \mathcal U[-1,1],\quad u_{it}\sim\mathcal N(0,1),\quad \rho\in\{0.2,0.9\},
\]
and the deterministic trend is homogeneous linear
\[
y^{(3)}_{it}(0) = \delta_i t,\qquad \delta_i=1.
\]
We consider two regimes for $y^{(1)}_{i0}(0)$:
\begin{align*}
y^{(1)}_{i0}(0)&\sim\mathcal N(1,2) &\text{(nonstationary)},\\ 
y^{(1)}_{i0}(0)&\sim\mathcal N\!\big(\tfrac{\mu_i}{1-\rho},\,\tfrac{1}{1-\rho^2}\big)&\text{(stationary)}.
\end{align*}

We work with a balanced panel, $T=8$ and $\tau=5$ (five pre-treatment periods), forecast horizons $h\in\{1,2,3\}$, and sample sizes $n\in\{50,1000\}$. Treatment effects at $\tau+h$ are zero by construction.

\paragraph{Estimators.}
For $b_k(t)=t^k$, the UC forecast for unit $i$ at $\tau+h$ is
\[
\widehat y^{\text{UC}}_{i,\tau+h}(0)=\sum_{k=0}^{q} \widehat c_{ik}(\tau+h)^k,\quad
\widehat c_i=\arg\min_{c\in\mathbb R^{q+1}}\sum_{t=1}^{\tau}\Big(y_{it}-\sum_{k=0}^q c_k t^k\Big)^2,
\]
and our UC FAT estimator is
\begin{equation}
\label{eq:UC_FAT}
\widehat{\text{FAT}}^{\text{UC}}_h=\frac1n\sum_{i=1}^n \big(y_{i,\tau+h}-\widehat y^{\text{UC}}_{i,\tau+h}(0)\big).
\end{equation}
We compute \eqref{eq:UC_FAT} using the \texttt{fatEstimator} package in R, with $R=q+1$ and $q\in\{0,1,2,3\}$. 

The PM forecast first estimates $\rho$ by Anderson–Hsiao using $y_{i,t-2}$ as an instrument for $\Delta y_{i,t-1}$ in $\Delta y_{it}=\rho\,\Delta y_{i,t-1}+\Delta u_{it}$, then fits a polynomial to the residuals $y_{it}-\widehat\rho\,y_{i,t-1}$, so that:
\[
\widehat y^{\text{PM}}_{i,\tau+h}(0)=\widehat\rho\,y_{i,\tau}+\sum_{k=0}^{q}\widehat c_{ik}(\tau+h)^k,\quad
\widehat c_i=\arg\min_{c}\sum_{t=1}^{\tau}\Big(y_{it}-\widehat\rho\,y_{i,t-1}-\sum_{k=0}^{q} c_k t^k\Big)^2,
\]
with the PM FAT estimator given by:
\begin{equation}
\label{eq:MB_FAT}
\widehat{\text{FAT}}^{\text{PM}}_h=\frac1n\sum_{i=1}^n\big(y_{i,\tau+h}-\widehat y^{\text{PM}}_{i,\tau+h}(0)\big).
\end{equation}
We estimate $\rho$ using the \texttt{ivreg} package in R and then use the \texttt{fatEstimator} package on the residuals $y_{it}-\widehat\rho\,y_{i,t-1}$. The specification fits polynomials of degree $q\in\{0,1,2,3\}$ to $R=q+1$ pre-treatment periods.

\paragraph{Simulation scenarios.}
We consider four scenarios, varying the cross-sectional size and the autoregressive parameter:  
\[
(n,\rho) \in \{(50,0.2),\ (50,0.9),\ (1000,0.2),\ (1000,0.9)\}.
\]  

We report bias and RMSE for UC and PM FAT across polynomial orders $q\in\{0,1,2,3\}$, forecast horizons $h\in\{1,2,3\}$, AR parameter magnitude $\rho\in\{0.2,0.9\}$, and sample size $n\in\{50,1000\}$, based on 500 Monte Carlo replications.

Tables \ref{tab:UCFAT-nonstationary-h} and \ref{tab:UCFAT-stationary-h} report bias and RMSE for the UC FAT for a DGP with initial condition drawn from the nonstationary distribution (Table \ref{tab:UCFAT-nonstationary-h}) and from the stationary distribution (Table \ref{tab:UCFAT-stationary-h}). With a nonstationary initial condition (Table~\ref{tab:UCFAT-nonstationary-h}), the $q=0$ specification - which omits the linear trend - exhibits large upward bias that grows roughly linearly with the forecast horizon $h$. Including a linear term ($q=1$) effectively removes this bias: the remaining deviations from zero are small across all horizons and values of $\rho$. Higher-order specifications ($q=2,3$) do not reduce asymptotic bias in this DGP (the true trend is linear), but they affect small-sample behavior: they can attenuate some initialization effects at the cost of substantially larger RMSE because of polynomial extrapolation. In all cases, RMSE increases with the forecast horizon.

With a stationary initial condition (Table~\ref{tab:UCFAT-stationary-h}), the $q=0$ specification still omits the linear time trend and exhibits a systematic upward bias. Once a linear term is included ($q=1$), the forecast specification matches the deterministic component of the DGP, and bias is essentially zero across horizons and values of $\rho$; the small nonzero entries in the table reflect finite-sample variability. Higher-order specifications ($q=2,3$) remain asymptotically unbiased but exhibit larger RMSE due to higher-order polynomial extrapolation. As expected, RMSE increases with $h$, while bias patterns are stable across $\rho$.

Tables \ref{tab:MBFAT_nonstationary_h} and \ref{tab:MB-h-stationary} report bias and RMSE of PM FAT for a DGP with initial condition drawn from the nonstationary distribution (Table \ref{tab:MBFAT_nonstationary_h}) and from the stationary distribution (Table \ref{tab:MB-h-stationary}). With a nonstationary initial condition (Table~\ref{tab:MBFAT_nonstationary_h}), the $q=0$ specification again omits the linear trend and generates a large upward bias that increases with the forecast horizon $h$. Introducing a linear term ($q=1$) removes this trend-misspecification bias at the population level, so that any remaining bias is driven by first-stage estimation of the AR parameter $\rho$. In finite samples, PM FAT inherits a small Nickell-type bias through $\hat{\rho}$, which propagates into the $h$-step forecast and yields modest biases even when $q=1$. Higher-order specifications ($q=2,3$) do not further reduce this asymptotic bias but amplify RMSE because polynomial extrapolation increases forecast variance, especially at longer horizons. As with UC FAT, RMSE rises with $h$, while the qualitative bias pattern is similar across values of $\rho$.

With a stationary initial condition (Table~\ref{tab:MB-h-stationary}), the $q=0$ model still omits the linear trend and exhibits a systematic upward bias, though somewhat attenuated relative to the nonstationary case. Once a linear term is included ($q=1$), the forecasting specification matches the deterministic component of the DGP, and PM FAT is asymptotically unbiased; the small residual biases reflect the Nickell-type distortion from estimating $\rho$ in the first step. As before, higher-order specifications ($q=2,3$) remain asymptotically unbiased but exhibit larger RMSE due to variance inflation from higher-order polynomials. Compared to the nonstationary case, the stationary initial condition reduces the magnitude of the finite-sample biases, while RMSE continues to increase with $h$ and bias patterns remain stable across $\rho$.

\begin{table}[H]
\centering
\begin{minipage}{1\textwidth}
\centering
\begin{threeparttable}
\caption{Bias and RMSE for Unobserved-Components FAT across forecast horizons and polynomial orders using $R=q+1$ pre-treatment periods. The DGP features a nonstationary initial condition and a homogeneous AR parameter.
}
\label{tab:UCFAT-nonstationary-h}
\scriptsize
\setlength{\tabcolsep}{3pt}
\renewcommand{\arraystretch}{0.9}
\begin{tabular}{
  c c c
  *{2}{S}
  *{2}{S}
  *{2}{S}
  *{2}{S}
}
\toprule
\multicolumn{3}{c}{} &
\multicolumn{2}{c}{$q=0$} &
\multicolumn{2}{c}{$q=1$} &
\multicolumn{2}{c}{$q=2$} &
\multicolumn{2}{c}{$q=3$} \\
\cmidrule(lr){4-5}\cmidrule(lr){6-7}\cmidrule(lr){8-9}\cmidrule(lr){10-11}
{$n$} & {$\rho$} & {$h$} & {Bias} & {RMSE} & {Bias} & {RMSE} & {Bias} & {RMSE} & {Bias} & {RMSE} \\
\midrule
\multirow{3}{*}{50} & \multirow{3}{*}{0.2}
& 1 &  2.0036 &  2.0133 &  0.0165 &  0.4645 & -0.0746 &  1.2652 &  0.3312 &  3.3415 \\
& & 2 &  2.9911 &  2.9991 &  0.0066 &  0.6930 & -0.1439 &  2.4915 &  0.9446 &  8.0830 \\
& & 3 &  3.9955 &  4.0011 & -0.0090 &  0.8508 & -0.4136 &  3.8825 &  1.0853 & 14.6870 \\
\addlinespace
\multirow{3}{*}{50} & \multirow{3}{*}{0.9}
& 1 &  1.8777 &  1.8906 &  0.0316 &  0.3525 &  0.0029 &  0.8469 & -0.0107 &  2.1449 \\
& & 2 &  2.8088 &  2.8229 &  0.0338 &  0.5405 & -0.0221 &  1.6945 &  0.0071 &  5.2861 \\
& & 3 &  3.7749 &  3.7879 &  0.0435 &  0.6605 & -0.1310 &  2.6204 & -0.4783 &  9.7770 \\
\addlinespace
\multirow{3}{*}{1000} & \multirow{3}{*}{0.2}
& 1 &  1.9964 &  1.9969 &  0.0029 &  0.1056 & -0.0901 &  0.2949 &  0.2897 &  0.7752 \\
& & 2 &  2.9996 &  2.9999 &  0.0228 &  0.1512 & -0.1186 &  0.5365 &  0.9476 &  1.8948 \\
& & 3 &  3.9981 &  3.9984 &  0.0138 &  0.1849 & -0.2710 &  0.8548 &  1.7317 &  3.5273 \\
\addlinespace
\multirow{3}{*}{1000} & \multirow{3}{*}{0.9}
& 1 &  1.8730 &  1.8736 &  0.0149 &  0.0808 & -0.0207 &  0.1926 & -0.0412 &  0.4738 \\
& & 2 &  2.8187 &  2.8194 &  0.0443 &  0.1229 &  0.0039 &  0.3625 &  0.0390 &  1.0985 \\
& & 3 &  3.7728 &  3.7735 &  0.0635 &  0.1613 & -0.0456 &  0.5631 & -0.0826 &  2.0449 \\
\bottomrule
\end{tabular}
\begin{tablenotes}[para,flushleft]
\footnotesize
  Results are based on $500$ Monte Carlo replications of the DGP described in Section \ref{subsec:mc-pr-mb}. For $q=0$, bias grows linearly in $h$ and is large. Once a linear trend is included ($q=1$), bias is substantially reduced and remains small across horizons. Higher-order specifications ($q=2,3$) do not remove additional bias in this DGP but inflate forecast variability.
\end{tablenotes}
\end{threeparttable}
\end{minipage}
\end{table}

\begin{table}[H]
\centering
\begin{minipage}{1\textwidth}
\centering
\begin{threeparttable}
\caption{Bias and RMSE for Unobserved-Components FAT across forecast horizons and polynomial orders using $R=q+1$ pre-treatment periods. Stationary initial condition and homogeneous AR parameter.}
\label{tab:UCFAT-stationary-h}
\scriptsize
\setlength{\tabcolsep}{3pt}
\renewcommand{\arraystretch}{0.9}
\begin{tabular}{
  c c c
  *{2}{S}
  *{2}{S}
  *{2}{S}
  *{2}{S}
}
\toprule
\multicolumn{3}{c}{} &
\multicolumn{2}{c}{$q=0$} &
\multicolumn{2}{c}{$q=1$} &
\multicolumn{2}{c}{$q=2$} &
\multicolumn{2}{c}{$q=3$} \\
\cmidrule(lr){4-5}\cmidrule(lr){6-7}\cmidrule(lr){8-9}\cmidrule(lr){10-11}
{$n$} & {$\rho$} & {$h$} & {Bias} & {RMSE} & {Bias} & {RMSE} & {Bias} & {RMSE} & {Bias} & {RMSE} \\
\midrule
\multirow{3}{*}{50} & \multirow{3}{*}{0.2}
& 1 &  1.9829 &  1.9928 & -0.0598 &  0.4593 & -0.1349 &  1.2571 & -0.2328 &  3.2557 \\
& & 2 &  2.9943 &  3.0014 & -0.0196 &  0.6481 & -0.0366 &  2.2848 & -0.0147 &  7.3752 \\
& & 3 &  4.0111 &  4.0164 &  0.0473 &  0.8165 &  0.1694 &  3.6622 &  0.3981 & 14.2240 \\
\addlinespace
\multirow{3}{*}{50} & \multirow{3}{*}{0.9}
& 1 &  1.9962 &  2.0072 & -0.0394 &  0.3575 & -0.1069 &  0.8477 & -0.2077 &  2.1326 \\
& & 2 &  2.9950 &  3.0050 & -0.0163 &  0.5070 & -0.0422 &  1.5812 & -0.0577 &  4.8478 \\
& & 3 &  4.0120 &  4.0218 &  0.0422 &  0.6725 &  0.1414 &  2.5028 &  0.3873 &  9.3314 \\
\addlinespace
\multirow{3}{*}{1000} & \multirow{3}{*}{0.2}
& 1 &  2.0005 &  2.0010 & -0.0002 &  0.1117 & -0.0032 &  0.2971 & -0.0147 &  0.7600 \\
& & 2 &  2.9976 &  2.9979 & -0.0038 &  0.1482 & -0.0216 &  0.5316 & -0.0883 &  1.6581 \\
& & 3 &  3.9998 &  4.0000 & -0.0127 &  0.1847 & -0.0576 &  0.8121 & -0.2343 &  3.0834 \\
\addlinespace
\multirow{3}{*}{1000} & \multirow{3}{*}{0.9}
& 1 &  2.0013 &  2.0017 &  0.0008 &  0.0832 & -0.0007 &  0.2051 & -0.0066 &  0.5019 \\
& & 2 &  2.9980 &  2.9985 & -0.0029 &  0.1125 & -0.0091 &  0.3661 & -0.0492 &  1.1188 \\
& & 3 &  4.0034 &  4.0039 & -0.0073 &  0.1460 & -0.0420 &  0.5682 & -0.1535 &  2.0527 \\
\bottomrule
\end{tabular}
\begin{tablenotes}[para,flushleft]
\footnotesize
Results are based on $500$ Monte Carlo replications of the DGP described in Section \ref{subsec:mc-pr-mb}. For $q=0$, the estimator omits the linear time trend and produces a systematic upward bias. Once a linear term is included ($q=1$), the forecast specification matches the deterministic component of the DGP, so bias vanishes; small deviations in the table reflect finite-sample variability. Higher-order specifications ($q=2,3$) remain unbiased asymptotically in $n$ but yield larger RMSE because polynomial extrapolation amplifies forecast variance.

\end{tablenotes}
\end{threeparttable}
\end{minipage}
\end{table}

\begin{table}[H]
\centering
\begin{minipage}{1\textwidth}
\centering
\begin{threeparttable}
\caption{Bias and RMSE for Parametric-Model FAT across forecast horizons and polynomial orders using $R=q+1$ pre-treatment periods. Nonstationary initial condition and homogeneous AR parameter.}
\label{tab:MBFAT_nonstationary_h}
\scriptsize
\setlength{\tabcolsep}{3pt}
\renewcommand{\arraystretch}{0.9}
\begin{tabular}{
  c c c
  *{2}{S}
  *{2}{S}
  *{2}{S}
  *{2}{S}
}
\toprule
\multicolumn{3}{c}{} &
\multicolumn{2}{c}{$q=0$} &
\multicolumn{2}{c}{$q=1$} &
\multicolumn{2}{c}{$q=2$} &
\multicolumn{2}{c}{$q=3$} \\
\cmidrule(lr){4-5}\cmidrule(lr){6-7}\cmidrule(lr){8-9}\cmidrule(lr){10-11}
{$n$} & {$\rho$} & {$h$} & {Bias} & {RMSE} & {Bias} & {RMSE} & {Bias} & {RMSE} & {Bias} & {RMSE} \\
\midrule
\multirow{3}{*}{50} & \multirow{3}{*}{0.2}
& 1 &  1.5740 &  1.6295 &  0.0037 &  0.5433 & -0.0259 &  1.5487 & -0.1559 &  4.0561 \\
& & 2 &  2.3543 &  2.4345 & -0.0251 &  0.7298 & -0.2214 &  2.7906 & -0.8179 &  8.9927 \\
& & 3 &  3.1187 &  3.2246 &  0.0497 &  0.9475 &  0.1381 &  4.1904 & -0.1103 & 15.8752 \\
\addlinespace
\multirow{3}{*}{50} & \multirow{3}{*}{0.9}
& 1 &  0.8352 &  1.4225 & -0.0538 &  0.5179 & -0.0828 &  1.4012 & -0.1100 &  3.5624 \\
& & 2 &  1.3678 &  2.1894 & -0.0508 &  0.6659 & -0.2010 &  2.4168 & -0.5022 &  7.7318 \\
& & 3 &  1.7103 &  2.8662 & -0.0006 &  0.8940 &  0.1780 &  3.7222 &  0.8174 & 13.8606 \\
\addlinespace
\multirow{3}{*}{1000} & \multirow{3}{*}{0.2}
& 1 &  1.5999 &  1.6028 &  0.0107 &  0.1181 &  0.0264 &  0.3309 &  0.0428 &  0.8568 \\
& & 2 &  2.4025 &  2.4061 &  0.0012 &  0.1462 & -0.0091 &  0.5638 & -0.0176 &  1.9492 \\
& & 3 &  3.1996 &  3.2046 &  0.0028 &  0.2096 &  0.0177 &  0.9540 &  0.0418 &  3.5432 \\
\addlinespace
\multirow{3}{*}{1000} & \multirow{3}{*}{0.9}
& 1 &  0.2371 &  0.3189 &  0.0084 &  0.1166 &  0.0243 &  0.3281 &  0.0519 &  0.8504 \\
& & 2 &  0.3438 &  0.4728 & -0.0018 &  0.1464 & -0.0059 &  0.5634 &  0.0306 &  1.9445 \\
& & 3 &  0.4719 &  0.6387 & -0.0014 &  0.2074 &  0.0171 &  0.9476 &  0.1110 &  3.5020 \\
\bottomrule
\end{tabular}
\begin{tablenotes}[para,flushleft]
\footnotesize
Results are based on $500$ Monte Carlo replications of the DGP described in Section \ref{subsec:mc-pr-mb}. For $q=0$, the estimator omits the linear time trend and produces a systematic upward bias. Once a linear term is included ($q=1$), the bias disappears. Higher-order specifications ($q=2,3$) remain unbiased asymptotically in $n$ but yield larger RMSE. RMSE grows with $h$, while bias patterns remain stable across different values of~$\rho$.
\end{tablenotes}
\end{threeparttable}
\end{minipage}
\end{table}

\begin{table}[H]
\centering
\begin{minipage}{1\textwidth}
\centering
\begin{threeparttable}
\caption{Monte Carlo Results: Bias and RMSE for Parametric-Model FAT across forecast horizons and polynomial using $R=q+1$ pre-treatment periods. Stationary initial condition and homogeneous AR parameter.}
\label{tab:MB-h-stationary}
\scriptsize
\setlength{\tabcolsep}{3pt}
\renewcommand{\arraystretch}{0.9}
\begin{tabular}{
  c c c
  *{2}{S}
  *{2}{S}
  *{2}{S}
  *{2}{S}
}
\toprule
\multicolumn{3}{c}{} &
\multicolumn{2}{c}{$q=0$} &
\multicolumn{2}{c}{$q=1$} &
\multicolumn{2}{c}{$q=2$} &
\multicolumn{2}{c}{$q=3$} \\
\cmidrule(lr){4-5}\cmidrule(lr){6-7}\cmidrule(lr){8-9}\cmidrule(lr){10-11}
{$n$} & {$\rho$} & {$h$} & {Bias} & {RMSE} & {Bias} & {RMSE} & {Bias} & {RMSE} & {Bias} & {RMSE} \\
\midrule
\multirow{3}{*}{50} & \multirow{3}{*}{0.2}
& 1 &  1.6119 &  1.6686 &  0.0154 &  0.5343 & -0.0164 &  1.4927 & -0.1151 &  3.9297 \\
& & 2 &  2.4122 &  2.4798 &  0.0148 &  0.7293 & -0.1584 &  2.6326 & -0.7999 &  8.4156 \\
& & 3 &  3.1723 &  3.2695 &  0.0709 &  0.9797 &  0.2025 &  4.5242 &  0.1281 & 17.0235 \\
\addlinespace
\multirow{3}{*}{50} & \multirow{3}{*}{0.9}
& 1 &  0.7370 &  1.2933 &  0.0108 &  0.4967 &  0.0084 &  1.3603 &  0.0136 &  3.5382 \\
& & 2 &  1.1573 &  1.9940 &  0.0348 &  0.6901 & -0.0844 &  2.4447 & -0.5345 &  7.7359 \\
& & 3 &  1.6071 &  2.7400 &  0.1014 &  0.9006 &  0.3299 &  4.0615 &  0.7651 & 15.2754 \\
\addlinespace
\multirow{3}{*}{1000} & \multirow{3}{*}{0.2}
& 1 &  1.6071 &  1.6097 &  0.0142 &  0.1143 &  0.0309 &  0.3139 &  0.0634 &  0.8297 \\
& & 2 &  2.3981 &  2.4016 &  0.0087 &  0.1664 &  0.0206 &  0.5888 &  0.0087 &  1.9239 \\
& & 3 &  3.1944 &  3.1981 & -0.0003 &  0.2023 &  0.0102 &  0.9322 &  0.0347 &  3.6121 \\
\addlinespace
\multirow{3}{*}{1000} & \multirow{3}{*}{0.9}
& 1 &  0.2328 &  0.3175 &  0.0145 &  0.1136 &  0.0327 &  0.3122 &  0.0680 &  0.8272 \\
& & 2 &  0.3432 &  0.4814 &  0.0090 &  0.1653 &  0.0231 &  0.5867 &  0.0129 &  1.9199 \\
& & 3 &  0.4503 &  0.6249 &  0.0014 &  0.2016 &  0.0165 &  0.9265 &  0.0547 &  3.5848 \\
\bottomrule
\end{tabular}
\begin{tablenotes}[para,flushleft]
\footnotesize
With stationary initial conditions, the $q=0$ specification still produces upward bias from omitting the linear trend. Once a linear term is included ($q=1$), the PM estimator is asymptotically unbiased; the small residual biases reflect estimation of $\rho$ in the first step. Higher-order $q$ remain unbiased but exhibit larger RMSE due to variance inflation. Compared to the nonstationary case, the stationary initial condition attenuates the size of the finite-sample biases.
\end{tablenotes}
\end{threeparttable}
\end{minipage}
\end{table}

\subsection{Choice of tuning parameters for Unobserved-Components FAT} \label{indicator}

In this section, we compare the finite-sample performance of the Unobserved-Components FAT estimator across different tuning parameters: the polynomial order, $q$, and the estimation window, $R$, for different specifications of the DGP. All specifications satisfy Assumption \ref{ass:Series2}, with the counterfactual process specified as the sum of up to three different components. 

Let $I_j, j=1,2,3$ be an indicator that equals to one whenever the associated components is present in the specification of the counterfactual process. For each $i=1,\dots,n$: 
\begin{align*}
    y_{it}(0) &= I_1{y}_{it}^{(1)}(0) +I_2{y}_{it}^{(2)}(0)+I_3{y}_{it}^{(3)}(0), \; t=1,\dots,T,\\
    {y}_{it}^{(1)}(0) & =\mu_{i}+\rho {y}_{it-1}^{(1)}(0)+u_{it}, \; t\geq 1, \qquad  
    {y}_{i0}^{(1)}(0) \sim\mathcal{N}\left(\frac{\mu_i}{1-\rho},\frac{1}{1-\rho^2}\right), \\
    {y}_{it}^{(2)}(0) &= {y}_{it-1}^{(2)}(0) + \epsilon_{it}, \; t\geq 1, 
    \qquad {y}_{i0}^{(2)}(0) = 0, \\ 
    {y}_{it}^{(3)}(0) &=\delta_{i}t,\\
    \mu_{i} &\sim\mathcal{U}\left[-1,1\right], 
    \quad u_{it} \sim\mathcal{N}\left(0,1\right), 
    \quad \epsilon_{it} \sim \mathcal{N}(0,1),\\
    \rho &= 0.2, \quad \delta_{i} =1, \quad T=6, \quad \tau=5, \quad h=1, \quad n=1000.
\end{align*}

Note that the initial observation ${y}_{i0}^{(1)}$ is drawn from the stationary distribution and the time trend is linear and homogeneous, so it can be interpreted as a common shock.

Table \ref{tab:table2} shows results for the bias and standard error of $\widehat{\rm FAT}^{\rm UC}_1$ across different tuning parameters. The table shows that, when the DGP is mean stationary (first panel) or when it is the sum of a mean stationary and a random walk (second panel), the bias does not vary much across different values of the tuning parameters, while the standard errors are smaller for smaller $q$ and larger $R$. When the DGP contains a time trend component, we observe bias when the polynomial-order $q$ is less than the true order of the time trend, as the theory predicts. In this case, however, a smaller estimation window $R$ gives a smaller bias. When $q\geq 1$, the performance of the estimator in terms of bias is again robust to the choice of tuning parameters, with smaller standard errors for smaller $q$ and larger $R$.\footnote{We observe the same behavior with a sample size of $N=30$, see Table \ref{tab:table_smallN} in the Online Appendix.}

\begin{table}[]
\spacingset{1.7}
\centering
\tiny
\begin{tabular}{@{}crlllll@{}}
\toprule
\multicolumn{1}{l}{}                                                                 & \multicolumn{1}{l}{}      & \multicolumn{5}{c}{$R$}             \\ \cmidrule(l){3-7} 
                                                                                     & \multicolumn{1}{l}{}      & $q+1$ & $q+2$ & $q+3$ & $q+4$ & $q+5$ \\ \midrule
\multirow{9}{*}{\begin{tabular}[c]{@{}c@{}}Stationary AR(1) \\ $I_1=1, I_2 = 0 = I_3$ \end{tabular}}     & \multicolumn{1}{c}{$q=0$} &       &       &       &       &       \\ \cmidrule(lr){2-2}
 & bias                      & -0.0002 & -0.0005 & -0.0003  & -0.0001 & 0.0005 \\
 & s.e.                      & 0.0397 & 0.036   & 0.0354  & 0.0346  & 0.0341  \\ \cmidrule(lr){2-2}
 & \multicolumn{1}{l}{$q=1$} &        &         &         &         &         \\ \cmidrule(lr){2-2}
 & bias                      & 0.0047 & 0.0003  & 0.0009  & 0.0008  &         \\
 & s.e.                      & 0.0709 & 0.0565  & 0.0476  & 0.0448  &         \\ \cmidrule(lr){2-2}
 & \multicolumn{1}{c}{$q=2$} &        &         &         &         &         \\ \cmidrule(lr){2-2}
 & bias                      & 0.0112 & 0.0023  & 0.0015  &         &         \\
 & s.e.                      & 0.1225 & 0.0907  & 0.0726  &         &         \\ \midrule
 \multirow{9}{*}{\begin{tabular}[c]{@{}c@{}}Stationary AR(1) \\ + unit root \\ $I_1=1=I_2, I_3=0$ \end{tabular}}     & \multicolumn{1}{c}{$q=0$} &       &       &       &       &       \\ \cmidrule(lr){2-2}
 & bias                      & -0.0029 & -0.0041  & -0.0045  & -0.005  & -0.005 \\
 & s.e.                      & 0.0516 & 0.0512  & 0.0525  & 0.0547  & 0.0577  \\ \cmidrule(lr){2-2}
 & \multicolumn{1}{l}{$q=1$} &        &         &         &         &         \\ \cmidrule(lr){2-2}
 & bias                      & -0.0004 & -0.0023  & -0.0023  & -0.0033  &         \\
 & s.e.                      & 0.082 & 0.0664  & 0.0625  & 0.0606  &         \\ \cmidrule(lr){2-2}
 & \multicolumn{1}{c}{$q=2$} &        &         &         &         &         \\ \cmidrule(lr){2-2}
 & bias                      & 0.0025 & -0.0011  & -0.0002  &         &         \\
 & s.e.                      & 0.1454 & 0.0997  & 0.0868  &         &         \\ \midrule
 \multirow{9}{*}{\begin{tabular}[c]{@{}c@{}}Stationary AR(1) \\+ linear trend  \\ $I_1=1=I_3, I_2=0$ \end{tabular}} & \multicolumn{1}{l}{$q=0$} &       &       &       &       &       \\ \cmidrule(lr){2-2}
 & bias                      & 0.9998 & 1.4995 & 1.9997  & 2.4999 & 3.0005  \\
 & s.e.                      & 0.0397 & 0.036 & 0.0354  & 0.0346 & 0.0341  \\ \cmidrule(lr){2-2}
 & \multicolumn{1}{l}{$q=1$} &        &         &         &         &         \\ \cmidrule(lr){2-2}
 & bias                      & -0.0027  & -0.0024 & -0.0008 & -0.001 &         \\
 & s.e.                      & 0.068 & 0.0536  & 0.0466  & 0.0442  &         \\ \cmidrule(lr){2-2}
 & \multicolumn{1}{l}{$q=2$} &        &         &         &         &         \\ \cmidrule(lr){2-2}
 & bias                      & -0.0032 & -0.0051  & -0.0022  &         &         \\
 & s.e.                      & 0.1225 & 0.0839  & 0.0698  &         &         \\ \midrule
 \multirow{9}{*}{\begin{tabular}[c]{@{}c@{}}Stationary AR(1) \\ + linear trend \\ + unit root \\ $I_1=I_2=I_3=1$  \end{tabular}} & \multicolumn{1}{l}{$q=0$} &       &       &       &       &       \\ \cmidrule(lr){2-2}
 & bias                      & 0.9971 & 1.4959  & 1.9955  & 2.4950  & 2.9950  \\
 & s.e.                      & 0.0516 & 0.0512  & 0.0525  & 0.0547  & 0.0577  \\ \cmidrule(lr){2-2}
 & \multicolumn{1}{l}{$q=1$} &        &         &         &         &         \\ \cmidrule(lr){2-2}
 & bias                      & 0.0005 & -0.0015 & 0.001 & 0.0014 &         \\
 & s.e.                      & 0.0831 & 0.0659 & 0.0608 & 0.0621 &         \\ \cmidrule(lr){2-2}
 & \multicolumn{1}{l}{$q=2$} &        &         &         &         &         \\ \cmidrule(lr){2-2}
 & bias                      & 0.001   & -0.0047    & -0.0018  &         &         \\
 & s.e.                      & 0.1447  & 0.1024   & 0.0873  &         &         \\ \bottomrule \bottomrule
\end{tabular}
\caption{Bias and standard error (s.e.) for the Unobserved-Components FAT when the counterfactual is specified as indicated in the left-most column. Sample size $N=1000$.}
\label{tab:table2}
\end{table}
\spacingset{1.9}

\section{Empirical illustration}\label{sec:EmpiricalIllustration}

We consider an empirical exercise showing that FAT, despite not requiring an untreated control group, yields conclusions consistent with established methods that rely on a control group. This positions FAT as a valuable complementary robustness check when a control group is available, rather than a replacement for existing methods such as TWFE or DiD.\footnote{Additional replications are included in the Online Appendix \ref{sec:overdose} and \ref{sec:DIDrefugees}.} 

We consider the application in \cite{GoodmanBacon} on the effect of no-fault laws on female suicide. The data are on U.S. states that adopted no-fault divorce laws from 1969 to 1985. The outcome is the age-adjusted average suicide mortality rate per million women (ASMR) in the state. We work with a balanced panel of $n=36$ treated states and 10 time periods, 5 of which are pre-treatment.

We assume the same tuning parameters $q$ and $R$ for FAT across states. To help select the basis functions and the tuning parameters, Figure \ref{fig:suicide_adoption} plots the outcome variable averaged across states, as a function of time to adoption. The visible trending behaviour (plausibly reflecting a deterministic component) in the pre-treatment data suggests using a polynomial basis function and choosing orders $q>0$. There is no strong indication of structural instability in the pre-treatment data, so we use all available data for estimation, that is, we set $R=5$. We compute the forecast $\widehat{y}^{UC}_{i\tau+h}$ in \eqref{eq:UC_FAT} for a number of forecast horizons $h$ and orders $q$. The Unobserved-Components FAT estimates at $\tau+h$ are shown in the top panel of Figure \ref{fig:suicide_FAT_placebo} as a function of $h\in\{1,\dots,5\}$ and for $q\in\{1,2\}$. The variance of the estimator is computed as explained in Section \ref{app:VarEstimation}. The results show a statistically insignificant decrease in the suicide rate after adoption of the no-fault divorce laws across different forecast horizons. This replicates the findings in \cite{GoodmanBacon}, \cite{ClarkeTapiaSchythe}, but without using a control group. The standard error of our estimator is comparable to that obtained in the above-mentioned studies that employ event-study design approaches and, as expected, our $95\%$ confidence intervals are greater for longer forecast horizons (see, e.g., the results in Figure 2 in \citealt{ClarkeTapiaSchythe}).\footnote{To see how our estimator performs in terms of standard error with a small sample size, as in this application, see Table \ref{tab:table_smallN} in the Online Appendix for simulation results with $N=30$.}

To validate the procedure, in Figure \ref{fig:suicide_FAT_placebo} we report placebo FAT at lag $j$, $j\in\{0,\dots,3\}$, calculated as if the law was adopted $j$ years earlier than its actual adoption date (for each state). The forecast horizon is $h=1$. The figure shows that the placebo FATs for both polynomial orders are close to zero. This offers suggestive evidence that FAT estimates may be interpreted as ATT estimates.

\begin{figure}[tbh!]
\begin{centering}
\includegraphics[width=0.8\linewidth]{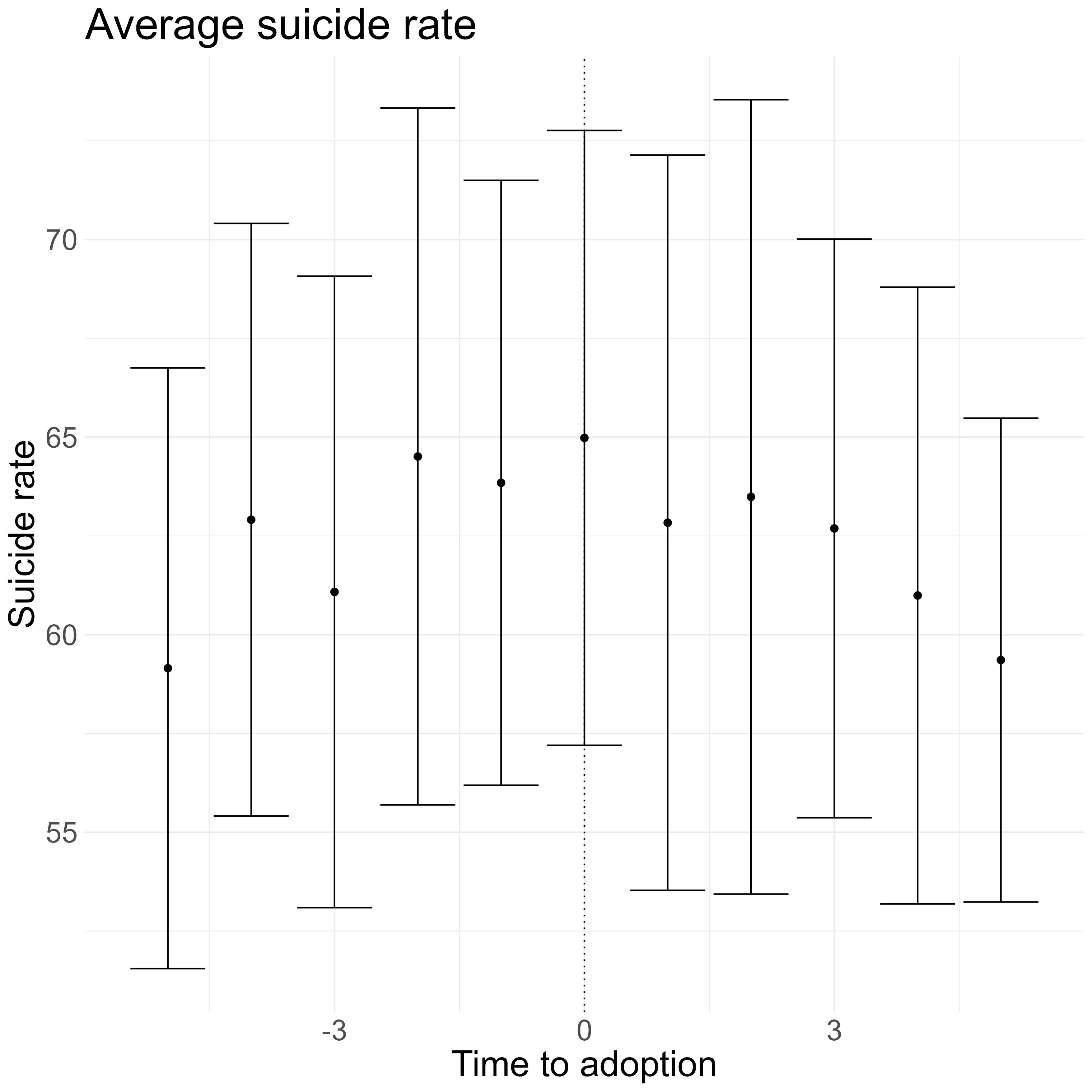}
\par\end{centering}
\caption{Average suicide rates across states as a function of time-to-adoption.}
\label{fig:suicide_adoption}
\end{figure}

\begin{figure}[tbh!]
\begin{centering}
\includegraphics[width=0.8\linewidth]{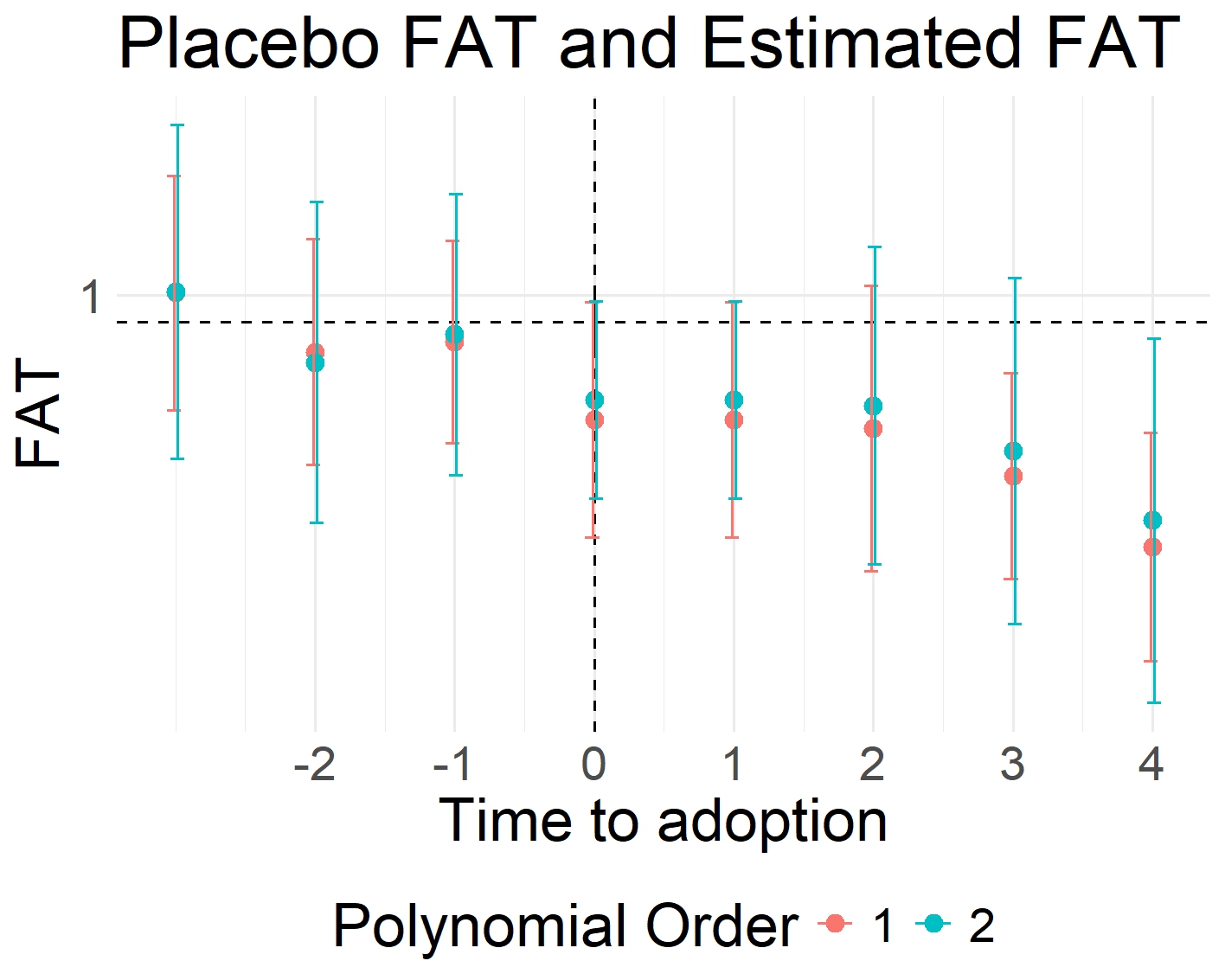}
\par\end{centering}
\caption{FAT estimates for different forecast horizons (horizontal axis) and polynomial orders (different colors). To the left of the dashed vertical line, the figure shows placebo FAT (for $h=1$) as a function of the lag (horizontal axis) and polynomial order (different lines). The error bars in both panels represent $95\%$ confidence intervals.}
\label{fig:suicide_FAT_placebo}
\end{figure}

\section{Conclusion}\label{sec:Conclusion}
This paper proposes an estimator of the average treatment effects in the absence of an untreated or control group, based on forecasting individual counterfactuals using basis function regressions over a (short) time series of pre-treatment data. Forecast unbiasedness is a key requirement that is satisfied by our approach under a broad class of DGPs that express the individual counterfactuals as the sum of up to three unobserved components: a stationary process, a stochastic trend, and a deterministic trend. The approach is robust, general, and flexible, allowing for unbalanced panels, heterogeneous treatment effects, and staggered treatment timing. Forecasting counterfactuals using a parametric model instead requires stronger assumptions and can perform poorly due to misspecification and estimation bias in small samples (e.g., incidental parameter problem).

\FloatBarrier

\spacingset{1.9}

\appendix

\section{Proofs \label{sec:Appendix_Proofs}}

\begin{proof}[Proof of Lemma \ref{lem:LemmaConsistencyFAT}]
We have 
\begin{align*}
    \widehat{{\rm FAT}}_{h}-{\rm ATT}_{h} & =\frac{1}{n}\sum_{i=1}^{n}\left(y_{i\tau+h}-\widehat{y}_{i\tau+h}\left(0\right)-\mathbb{E}\left[y_{i\tau+h}-y_{i\tau+h}\left(0\right)\right]\right)\\
    & =\frac{1}{n}\sum_{i=1}^{n}\left(y_{i\tau+h}-\widehat{y}_{i\tau+h}\left(0\right)-\mathbb{E}\left[y_{i\tau+h}-\widehat{y}_{i\tau+h}\left(0\right)\right]\right)\\
    & =\frac{1}{n}\sum_{i=1}^{n}\left(u_{i\tau+h}-\mathbb{E}u_{i\tau+h}\right),
\end{align*}
where we used Assumption \ref{Unbiasedness} to obtain the second equality above. Since our assumptions guarantee that $\left(u_{i\tau+h}-\mathbb{E}u_{i\tau+h}\right)$
has zero mean and satisfies a CLT, we obtain the desired result. 
\end{proof}
\begin{proof}[Proof of Theorem \ref{th:Unbiasedness1}]
   It is sufficient to show that for each component $y_{it}^{(r)}(0)$, $r\in\{1,2\}$, 
    \begin{align}
    \mathbb{E}\left[\sum_{t\in{\cal T}_{i}}\,w_{it}\,y_{it}^{(r)}-y_{i\tau+h}^{(r)}(0)\right] & =0.\label{unb1}
    \end{align}
    
    For both the mean stationary component ($r=1$) and the random walk component ($r=2$) we have $\mathbb{E}\left(y_{it}^{(r)}-y_{i\tau+h}^{(r)}(0)\right)=0$. Multiplying this equation by $w_{it}$, summing over $t\in{\cal T}_{i}$, and using the fact that the non-random weights $w_{it}$ sum to 1, we obtain \eqref{unb1} for $r=1$ and $r=2$. 
\end{proof}

\begin{proof}[Proof of Lemma \ref{ForecastWeights}]
    Let $R_{i}=q_{i}+1$ and $c_{s}\equiv\tau_{i}-R_{i}+s$, $s=1,2,\dots,R_{i}$. Define the $R_{i}\times\left(q_{i}+1\right)$ alternant matrix $X_{i}$ and the $1\times\left(q_{i}+1\right)$ vector $H_{i}$ as, respectively, 
    \begin{equation}
        X_{i}\equiv\left[\begin{array}{cccc}
        1 & b_{1}\left(c_{1}\right) & \dots & b_{q_{i}}\left(c_{1}\right)\\
        1 & b_{1}\left(c_{2}\right) & \dots & b_{q_{i}}\left(c_{2}\right)\\
        1 & b_{1}\left(c_{3}\right) & \dots & b_{q_{i}}\left(c_{3}\right)\\
        \dots & \dots & \dots & \dots\\
        1 & b_{1}\left(c_{R_{i}}\right) & \dots & b_{q_{i}}\left(c_{R_{i}}\right)
        \end{array}\right],\text{\;}H_{i}\equiv\left[\begin{array}{cccc}
        1 & b_{1}\left(\tau+h\right) & \dots & b_{q_{i}}\left(\tau+h\right)\end{array}\right].\label{eq:Vandy-1}
        \end{equation}
        
    The OLS coefficients from regressing $y_{i}=\left(y_{i\tau_{i}-R_{i}+1},\dots,y_{i\tau_{i}}\right)$ on $b_{k}\left(t\right)$ are given by 
    \[\widehat{c_i}^{(q_i,R_i)}=\left(X_{i}'X_{i}\right)^{-1}X_{i}'y_{i},
    \]
    so that the $R_{i}$ forecast weights are 
    \begin{align}
    w_{i}=H_{i}\left(X_{i}'X_{i}\right)^{-1}X_{i}'.\label{weights_w}
\end{align}

Since $X_{i}$ is a Vandermonde matrix with the first column being
a column of ones (by assumption), it follows that $X_{i}e_{1}=\iota.$
Then 
\[
\left(X_{i}'X_{i}\right)^{-1}X_{i}'\iota=e_{1}\equiv\left[\begin{array}{c}
1\\
0\\
\vdots\\
0
\end{array}\right],
\]
so that $w_{i}\iota=1$, where $\iota$ is the $\left(q_{i}+1\right)\times1$
vector of ones. This proves the statement.
\end{proof}

\begin{proof}[Proof of Theorem \ref{th:Unbiasedness2}]
    It is sufficient to show that for each component $y_{it}^{(r)}(0)$, $r\in\{1,2,3\}$, 
    \begin{align}
    \mathbb{E}\left[\sum_{t\in{\cal T}_{i}}\,w_{it}^{(q_{i},R_{i})}\,y_{it}^{(r)}-y_{i\tau+h}^{(r)}(0)\right] & =0.\label{ToBeShown}
    \end{align}
    
     For the mean stationary component ($r=1$) and the random walk component ($r=2$) we have $\mathbb{E}\left(y_{it}^{(r)}-y_{i\tau+h}^{(r)}(0)\right)=0$. Multiplying this equation by $w_{it}^{(q_{i},R_{i})}$, summing over $t\in{\cal T}_{i}$, and using the fact that the non-random weights sum to 1 by Lemma \ref{ForecastWeights}, we obtain \eqref{ToBeShown} for $r=1$ and $r=2$. To show \eqref{ToBeShown} for the deterministic time trend component ($r=3$), note that by \eqref{weightavg}, $\sum_{t\in{\cal T}_{i}}w_{it}^{(q_{i},R_{i})}\,y_{it}^{(3)}=\sum_{k=0}^{q_{i}}\widetilde{c}_{ik}^{(q_{i},R_{i})}\,(\tau+h)^{k}$, where 
    \begin{align*}
    \widetilde{c_i}^{(q_{i},R_{i})} & :=\argmin_{c\in\mathbb{R}^{q_{i}+1}}\sum_{t\in{\cal T}_{i}}\left(y_{it}^{(3)}-\sum_{k=0}^{q_{i}}c_{k}\,b_k(t)\right)^{2}.
    \end{align*}
    Since $q_{i}\geq q_{0i}$ for all $i$, the objective function in the last display is minimized (with value zero) at $\widetilde{c}_{ik}^{(q_{i},R_{i})}=c_{ik}^{(3)}$, which implies $y_{i\tau+h}^{(3)}(0)=\sum_{t\in{\cal T}_{i}}\,w_{it}^{(q_{i},R_{i})}\,y_{it}^{(3)}$, that is, \eqref{ToBeShown} holds for $r=3$ even without taking the expectation. 
\end{proof}
\begin{proof}[Proof of Theorem \ref{AsyNofMBFAT}]
 We have
\begin{align*}
   & \widehat{{\rm FAT}}_{h}^{\rm MB} -{\rm ATT}_{h}
   \\
   & =\frac{1}{n}\sum_{i=1}^{n}\left(y_{i\tau+h}-\widehat{y}_h(\widehat \beta,y_i,x_i)
   -\mathbb{E}\left[y_{i\tau+h}-y_{i\tau+h}\left(0\right)\right]\right)
   \\
    & =\frac{1}{n}\sum_{i=1}^{n}\left(y_{i\tau+h}
    - \widehat{y}_h(\widehat \beta,y_i,x_i) -\underbrace{
    \mathbb{E}\left[y_{i\tau+h}-\widehat{y}_h(\beta_0,y_i,x_i)\right]}_{= \mathbb{E}\left[ u_{i\tau+h}^* \right]}\right)
\\
   &= \frac{1}{n} \sum_{i=1}^{n} \left[
      y_{i\tau+h}
       -\widehat{y}_h(\beta_0,y_i,x_i)
       - \frac{\partial \widehat{y}_h(\beta_0,y_i,x_i)} {\partial \beta'}
       \left( \widehat{\beta} - \beta_0 \right)
     \right] - \mathbb{E}\left[ u_{i\tau+h}^* \right]
       \\
     &\qquad 
     + O\left( R_n
     \left\| \widehat{\beta} - \beta_0 \right\|^2 
     \right)
  \\
  &= 
    \frac{1}{n} \sum_{i=1}^{n} u_{i\tau+h}^* - \mathbb{E} \left[u_{i\tau+h}^*\right]
     + O\left( R_n
     \left\| \widehat{\beta} - \beta_0 \right\|^2 
     \right)
     \\
     &\qquad 
       - \left\{ \frac{1}{n} \sum_{i=1}^{n}
     \frac{\partial \widehat{y}_h(\beta_0,y_i,x_i)} {\partial \beta'} - \frac{1}{n}\sum_j
\mathbb{E}\left[ \frac{\partial \widehat{y}_h(\beta_0,y_j,x_j)} {\partial \beta'} \right]
     \right\}   \left( \widehat{\beta} - \beta_0 \right)
      \\
     &\qquad 
     -\frac{1}{n}\sum_j\mathbb{E}\left[ \frac{\partial \widehat{y}_h(\beta_0,y_j,x_j)} {\partial \beta'} \right]
     r_n
     \\
    &= \frac{1}{n} \sum_{i=1}^{n} u_{i\tau+h}^* - \mathbb{E} \left[u_{i\tau+h}^*\right]
      + o_P(n^{-1/2})
\end{align*}
Here, in the first step, we plugged in the definitions
of $\widehat{{\rm FAT}}_{h}^{\rm MB}$  and ${\rm ATT}_{h}$.
In the second step, we
used the unbiasedness
of the forecast, definition \eqref{ustar}, and assumption (iii) that $\mathbb{E}\left(\psi(y_i,x_i)\right)=0$.
In the third step, given assumption (ii), we employed a Taylor expansion of $\widehat{y}_h(\beta,y_i,x_i)$ in $\beta$ around $\beta_0$. In the fourth step we decomposed $\frac{\partial \widehat{y}_h(\beta_0,y_i,x_i)} {\partial \beta'}$ into its expectation and its deviation from the expectation, and used
$\widehat{\beta} - \beta_0 = \frac 1 n \sum_{i=1}^n \psi(y_i,x_i) + r_n$
and the definition of $u_{i\tau+h}^*$ in \eqref{ustar}. In the final step, we used our assumptions to conclude that the various remainder terms are all of order $o_P(n^{-1/2})$.
By an application of a standard cross-sectional CLT
we then obtain the conclusion of the theorem.
\end{proof}

\section{Estimating the variance of $\widehat{{\rm FAT}}_{h}$}
\label{app:VarEstimation}

According to Lemma~\ref{lem:LemmaConsistencyFAT} the asymptotic variance
of $\sqrt{n} \widehat{{\rm FAT}}_{h}$ is given by $\bar{\sigma}^2_n:= {\rm Var}(\frac{1}{\sqrt{n}} \sum_i u_{i\tau+h})$, which under cross-sectional
independence can be consistently estimated by
$$
    \frac 1 n \sum_{i=1}^n  \left( u_{i\tau+h} - \overline u_{\tau+h} \right)^2 ,
$$
where $ \overline u_{\tau+h} =     \frac 1 n \sum_{i=1}^n  u_{i\tau+h} $, with $u_{i\tau+h}:=y_{i\tau+h}-\widehat{y}_{i\tau+h}(0)$.

If an additional unknown common parameter $\beta_0$ needs to be estimated, the uncertainty of the estimator $\widehat \beta$ also becomes relevant. In that case, according to Theorem~\ref{AsyNofMBFAT}, the asymptotic variance of  $\sqrt{n} \widehat{{\rm FAT}}_{h}$ 
is $\bar{\sigma}^{*2}_n := {\rm Var}\left(\frac{1}{\sqrt{n}}\sum_i u^*_{i\tau+h} \right)$, where 
 $u^*_{i\tau+h} :=  y_{i\tau+h}
       -\widehat{y}_h(\beta_0,y_i,x_i)
       - \frac{1}{n}\sum_{j=1}^n\mathbb{E}\left[\frac{\partial \widehat{y}_h(\beta_0,y_j,x_j)} {\partial \beta'}\right]
        \psi(y_i,x_i) $.
Here, both $\beta_0$ and the influence function   $ \psi(y_i,x_i)$ of $\widehat \beta$ are unknown. For example, if $\widehat \beta$
is a method of moment estimator (which includes OLS, MLE, and exactly identified IV) that solves the sample moment condition
$$
     \sum_{i=1}^n  g(y_i,x_i,\widehat \beta) = 0 ,
$$
then, assuming that $g(y_i,x_i,\beta)$ is sufficiently often differentiable in $\beta$, we have
$$
   \psi(y_i, x_i) = \left( \frac{1}{n} \sum_{j=1}^n \frac{\partial^2 g(y_j, x_j,\beta_0)}{\partial \beta \partial \beta'} \right)^{-1} \frac{\partial g(y_i, x_i,\beta_0)}{\partial \beta},
$$
which can be estimated by
$$
    \widehat \psi(\widehat \beta, y_i,x_i)  
    = \left( \frac{1}{n} \sum_{j=1}^n \frac{\partial^2 g(y_j, x_j, \widehat{\beta})}{\partial \beta \partial \beta'} \right)^{-1} \frac{\partial g(y_i, x_i, \widehat{\beta})}{\partial \beta}.
$$
In that case, we can estimate $u^*_{i\tau+h} $ by
$$
  \widehat u^*_{i\tau+h} =  y_{i\tau+h}
       -\widehat{y}_h(\widehat \beta,y_i,x_i)
       - \left( \frac{1}{n}\sum_{j=1}^n\ \frac{\partial \widehat{y}_h(\widehat{\beta},y_j,x_j)} {\partial \beta'} \right)
       \widehat \psi(\widehat \beta, y_i,x_i)  ,
$$
and a consistent estimator for  $\bar{\sigma}^{*2}_n$ is given by
$$
 \frac{1}{n} \sum_{i=1}^n \left(\widehat{u}^*_{i\tau+h} - \overline{\widehat{u}}^*_{\tau+h}\right)^2,
$$
where  $\overline{\widehat{u}}^*_{\tau+h} =     \frac 1 n \sum_{i=1}^n  \widehat{u}^*_{i\tau+h} $. If $\widehat \beta$ is not a method of moment
estimator (e.g., it is a more general GMM estimator), then the formula for  $\widehat \psi(\widehat \beta, y_i,x_i)  $ needs to 
be generalized accordingly, but the final estimation of $\bar{\sigma}^{*2}_n$ by the sample variance of $ \widehat u^*_{i\tau+h} $ remains
unchanged.

\newpage

\section{Online Appendix}

The Online Appendix contains extensions, additional simulations and empirical replications. We first discuss how to extend the procedure when there exists an untreated group (in Online Appendix \ref{controlgroup}). The simulations in the main text consider the case of homogeneous time trends and autoregressive parameters. We relax this in Online Appendix \ref{sec:Heterogeneouscoeffs}. Finally, we present two additional empirical applications in Online Appendix \ref{sec:overdose} and \ref{sec:DIDrefugees}.

\subsection{Extension: untreated group} \label{controlgroup}

In this section, we discuss how to modify our baseline procedure when a group of individuals not exposed to the treatment is available.

Without an untreated group, Section \ref{sec:Baseline-model} derived sufficient conditions ensuring that ${\rm FAT}_{h}$ defined in \eqref{DefFAT} is a consistent and asymptotically normal estimator of ${\rm ATT}_{h}$ defined in \eqref{ATT}. These conditions are the ability to obtain forecasts of the counterfactuals using pre-treatment data that are on average unbiased (Assumption \ref{Unbiasedness}) and the validity of a central limit theorem (Assumption \ref{ass:Sampling}). As discussed in the main text, these conditions exclude the presence of time effects such as macro shocks that affect all individuals between times $\tau$ and $\tau+h$, $h\geq 1$, and that are unforecastable using pre-treatment data. The presence of an untreated
group allows us to weaken this assumption. 

\subsubsection{DFAT: FAT with an untreated group}

Suppose that all individuals are untreated before the implementation
of the treatment at time $\tau$ and that some individuals remain untreated after $\tau$. Let $D_{i}=1$ if individual $i$ is treated after $\tau$. The observed outcome of individual $i$ at
time $t$ is then

\begin{equation}
y_{it}=D_{i}\left[1\left(t\leq\tau\right)y_{it}\left(0\right)+1\left(t>\tau\right)y_{it}\left(1\right)\right]+\left(1-D_{i}\right)y_{it}\left(0\right).\label{obsoutcome_control}
\end{equation}
As before, the parameter of interest is the average treatment effect on the
treated $h$ periods after the implementation of the treatment: 
\begin{equation}
{\rm ATT}_{h}=\frac{1}{n}\sum^n_{i=1}\mathbb{E}\left(\left.y_{i\tau+h}\left(1\right)-y_{i\tau+h}\left(0\right)\right\vert D_{i}=1\right).\label{ATT_control}
\end{equation}
Our proposed estimator is defined as:
\begin{equation}
\widehat{\rm DFAT}_{h}= \frac{1}{n_1}\sum_{i:D_i=1}(\left.y_{i\tau+h}-\widehat{y}_{i\tau+h}(0)\right) -\frac{1}{n_0}\sum_{i:D_i=0}(\left.y_{i\tau+h}-\widehat{y}_{i\tau+h}(0)\right),
\label{DFAThat}
\end{equation}
where $n_1$ is the number of treated individuals at time $\tau+h$, $n_0$ is the number of untreated individuals at time $\tau+h$, and  $y_{i\tau+h}$ is the observed outcome at $\tau+h$ given by \eqref{obsoutcome_control}.

Note that under \eqref{averagebias} below, $\mathbb{E}(\widehat{\rm DFAT}_h)={\rm ATT}_h$ in \eqref{ATT_control}:
\begin{equation}
\frac{1}{n}\sum^n_{i=1}\mathbb{E}\left(\left.y_{i\tau+h}\left(0\right)-\widehat{y}_{i\tau+h}(0)\right\vert D_{i}=1\right)=\frac{1}{n}\sum^n_{i=1}\mathbb{E}\left(\left.y_{i\tau+h}\left(0\right)-\widehat{y}_{i\tau+h}(0)\right\vert D_{i}=0\right).
\label{averagebias}
\end{equation}

Unlike in the baseline case, the forecast $\widehat{y}_{i\tau+h}(0)$ can 
be biased, as long as the average bias for the treated group equals the average bias for
the untreated group. As a consequence, the DGP for $y_{it}(0)$ can contain
additive time effects that are common across individuals such as additive macro
shocks that affect both treated and untreated groups in the same way.

Condition \eqref{averagebias} is satisfied when the DGP components in Assumption~\ref{ass:Series2} have the same average conditional expectation across treatment groups, i.e., when the mean-stationary, unit-root, and deterministic trend components are balanced on average between treated and untreated units. This is distinct from parallel trends: parallel trends restricts the evolution of outcome differences over time, while our condition restricts balance in the components that determine forecast errors.

The presence of an untreated group allows us to substitute Assumption \ref{ass:Series2} to allow for a common shock that is not necessarily polynomial, e.g., $y_{it}\left(0\right)=\tilde{y}_{it}(0) +\gamma_t$, where $\tilde{y}_{it}(0)$ satisfies Assumption \ref{ass:Series2}. Then, under assumptions similar to Assumptions \ref{Unbiasedness} and \ref{ass:Sampling}, it is possible to show consistency and asymptotic normality of \eqref{DFAThat}.

As in Section \ref{sec:Baseline-model} in the main text, we suggest using $\widehat{y}_{i\tau+h}^{(q_{i},R_{i})}(0)$ as an estimator for $\widehat{y}_{i\tau+h}(0)$ in \eqref{DFAThat}. Here, the parameters $q_{i},R_{i}$ do not necessarily have to be the same for the treated and untreated units.

\subsubsection{Comparison with Difference-in-Differences}

Despite the apparent similarity with the difference-in-differences (DiD) estimator, our method in the presence of an untreated group allows for DGPs for counterfactuals with more general forms of latent heterogeneity and outcome dynamics. For example, our approach allows for counterfactuals that follow fully heterogeneous autoregressive processes and/or unit root processes. In addition, it allows for the DGPs to have additive individual-specific time trends, as long as the deterministic time trend is either known or can be approximated by, e.g., a polynomial.

To see this, consider for example the following DGP for the counterfactuals: 
\[
y_{it}\left(0\right)=\rho y_{it-1}\left(0\right)+\gamma_{t}+k_{i}t+\epsilon_{it},\quad \mathbb{E}\left(\epsilon_{it}\right)=0,
\]
where $\rho\in[0,1]$, $\gamma_{t}$ is a common shock, and $k_{i}$ is an individual-specific time trend coefficient.

DiD can accommodate such specifications as long as the assumption of parallel-paths holds, which requires restricting the heterogeneity of both the initial condition, i.e., 
\[
\frac{1}{n}\sum^n_{i=1}\mathbb E(\left.y_{i0}(0) \right|D_i=1)=\frac{1}{n}\sum^n_{i=1}\mathbb E(\left.y_{i0}(0) \right|D_i=0),
\]
and the time trend coefficients, i.e.,
\[
\frac{1}{n}\sum^n_{i=1}k_{i}I\left(D_{i}=1\right)=\frac{1}{n}\sum^n_{i=1}k_{i}I\left(D_{i}=0\right),
\]
where $I(\cdot)$ is the indicator function. 

In contrast, ${\rm DFAT}_h$ does not require restricting the unobserved individual heterogeneity, and allows for heterogeneous $k_i$. In addition, it is straightforward to include lagged pre-treatment covariates with a homogeneous autoregressive parameter or a heterogeneous one, which is generally considered problematic in DiD methods. 

\medskip
\noindent \textbf{Remark (Dynamics and Transformed Outcomes).} 
It is theoretically possible to reconcile outcome dynamics with parallel trends by applying the parallel trends assumption to a transformed outcome. As emphasized by recent literature (and discussed in \cite{MarxTamerTang} in the context of sequential exchangeability), if the untreated outcome follows a dynamic process
\[
Y_{it}(0) = \alpha_i + \rho Y_{i,t-1}(0) + \lambda_t + u_{it}, \quad \mathbb{E}[u_{it}] = 0,
\]
one can define a transformed variable $w_{it}(\rho) \equiv Y_{it} - \rho Y_{i,t-1}$. If parallel trends hold for $w_{it}(\rho)$---i.e., $\mathbb{E}[w_{it}(\rho) - w_{is}(\rho) \mid D_i] = \mathbb{E}[w_{it}(\rho) - w_{is}(\rho)]$---then a standard DiD estimator applied to this transformed outcome identifies the ATT. Given enough pre-treatment periods, $\rho$ can be estimated consistently via Anderson-Hsiao.

However, the class of data generating processes handled by the FAT framework is broader in an important sense. The dynamic DiD specification above, as well as standard GMM dynamic-panel estimators \citep{ArellanoBond1991, BlundellBond1998}, typically require a \emph{homogeneous} autoregressive coefficient $\rho$. Even in flexible short-panel frameworks like \cite{ArellanoBonhomme2012}, the coefficient on the lagged outcome is often restricted. In contrast, FAT accommodates models of the form
\[
Y_{it}(0) = \rho_i Y_{i,t-1}(0) + \text{trend}_i + U_{it},
\]
where the autoregressive persistence $\rho_i$ varies across units. These types of models fall outside the scope of most existing short-panel estimators, which typically require restrictions on the form of dynamic heterogeneity.
\medskip

In addition, our paradigm of first obtaining individual-specific forecasts and only afterwards averaging across individuals avoids any concern about estimating weighted as opposed to unweighted treatment effects. It is well known that for unbalanced panels and for staggered adoption designs the DiD method will estimate a weighted average of the individual specific treatment effects with weights that are determined implicitly by the regression design, and that may even be negative in some cases (see e.g.\ \citealt{deChaisemartinX2020,GoodmanBacon}). By contrast, FAT (or DFAT) explicitly gives a weight of $1/n$ (or $1/n_d$) to each individual by construction, in accordance with the unweighted treatment effect that is specified as estimand.
  
\subsubsection{Related literature}

Conceptually, when there exists an untreated group, our solution resembles difference-in-differences or, more generally, an ``event-study design analysis'', e.g., \cite{Borusyaketal2021}. Although the estimated outcome equations may look similar, there is an important distinction between these methods and ours. For example, the extension of FAT to the case of a control group (which we call DFAT) uses control groups to correct for the effect of a common shock, whereas the other methods use control groups to correct for selection into treatment (under different assumptions). Additionally, FAT allows for heterogeneous nonlinear time trends as well as for heterogeneous effects of lagged pre-treatment outcomes. In contrast, there is no straightforward way to control for pre-treatment lagged outcomes in the specifications of, e.g., \cite{SunAbraham, CallawaySantAnna, Borusyaketal2021}. In fact, recent literature shows that in the presence of outcome dynamics, conventional estimators used in this literature produce biased estimates for the treatment effect \citep{BotosaruLiu2025,Cornwall2025}.

The synthetic control method has been increasingly used to evaluate the effect of
interventions implemented at an aggregate level, see \cite{Abadie2021} for a recent review. In the conventional
setting for synthetic controls, there is only one unit that is treated and there are many untreated units that could be used as pseudo-controls, i.e. these are untreated units selected such that the weighted average of their past outcomes ``resembles'' the trajectory of past outcomes of the treated unit. The counterfactual outcome for the treated unit is then constructed as a weighted average of the post-treatment outcomes of the selected pseudo-control units. In comparison, in our baseline setting, all individuals in the population are treated and there are no control units. The counterfactual outcome for each treated unit is a weighted average of the unit's \textit{own} past outcomes. The properties of our estimator rest on
averaging across many treated units, an advantage of which is standard inference. Our results 
apply even when the number of pre-treatment time periods is small, and we fully characterize the class
of DGPs that obtains a consistent estimator of the ATT.

Imputing counterfactual outcomes for the treated from data on the control is used in the literature on matrix completion, e.g., \cite{athey2021matrix, bai2019matrix, fernandez2020low}. Our framework has a thin matrix of outcomes. However since we do not observe cross-sectional control units, we cannot appeal to these methods to impute the counterfactuals.

\subsection{Simulation: Heterogeneous coefficients}\label{sec:Heterogeneouscoeffs}

We compare the finite-sample behavior of the Unobserved-Components FAT when the counterfactual process satisfies Assumption \ref{ass:Series2} with heterogeneous coefficients $\rho_i$ and $\delta_i$. That is, we consider the same specification of $y_{it}(0)$ as in Section \ref{indicator} with $I_1=I_2=I_3=1$, with the only changes being that $\rho_i$ and $\delta_i$ vary across individuals.

Table \ref{tab:table3} presents the results for  $\delta_i \sim \mathcal{U}\left[0,2\right]$ with a homogeneous autoregressive parameter $\rho=0.2$ (top panel) and with a heterogeneous autoregressive parameter $\rho_i \sim \mathcal{U}\left[0,0.99\right]$ (bottom panel). We can see that the presence of heterogeneous parameters does not change the conclusions that we derived from Table \ref{tab:table2}.

\begin{table}[]
\centering
\scriptsize
\begin{tabular}{@{}crlllll@{}}
\toprule
\multicolumn{1}{l}{}                                                                 & \multicolumn{1}{l}{}      & \multicolumn{5}{c}{$R_i$}             \\ \cmidrule(l){3-7} 
                                                                                     & \multicolumn{1}{l}{}      & $q+1$ & $q+2$ & $q+3$ & $q+4$ & $q+5$ \\ \midrule
 \multirow{9}{*}{\begin{tabular}[c]{@{}c@{}}Stationary AR(1) \\ $I_1=I_2=I_3=1$ \\ with $\delta_i \sim \mathcal{U}[0,2]$ \end{tabular}} & \multicolumn{1}{l}{$q=0$} &       &       &       &       &       \\ \cmidrule(lr){2-2}
 & bias                      & 0.999 & 1.4987  & 2.0003  & 2.5012 & 3.0015  \\
 & s.e.                      & 0.0544 & 0.0585 & 0.0655  & 0.0739 & 0.0826  \\ \cmidrule(lr){2-2}
 & \multicolumn{1}{l}{$q=1$} &        &         &         &         &         \\ \cmidrule(lr){2-2}
 & bias                      & 0.0013 & 0.0038 &  0.0009 & 0 &         \\
 & s.e.                      & 0.0793 & 0.0666  & 0.0646 & 0.0635  &         \\ \cmidrule(lr){2-2}
 & \multicolumn{1}{l}{$q=2$} &        &         &         &         &         \\ \cmidrule(lr){2-2}
 & bias                      & -0.0024 & 0.0065 & 0.0045  &         &         \\
 & s.e.                      & 0.1437 & 0.0971  & 0.0846  &         &       \\ \midrule
 \multirow{9}{*}{\begin{tabular}[c]{@{}c@{}} Stationary AR(1) \\ $I_1=I_2=I_3=1$ \\ with $\delta_i \sim \mathcal{U}[0,2]$ \\ and $\rho_i \sim \mathcal{U}[0,0.99]$  \end{tabular}} & \multicolumn{1}{l}{$q=0$} &       &       &       &       &       \\ \cmidrule(lr){2-2}
 & bias                      & 0.9967  & 1.4961  & 1.9963  & 2.4958  & 2.9947  \\
 & s.e.                      & 0.0554 & 0.0613  & 0.0693  & 0.078  & 0.0861  \\ \cmidrule(lr){2-2}
 & \multicolumn{1}{l}{$q=1$} &        &         &         &         &         \\ \cmidrule(lr){2-2}
 & bias                      & -0.0002 & 0.001 & 0.0027 & 0.0035 &         \\
 & s.e.                      & 0.0717 & 0.0629 & 0.0643 & 0.0672 &         \\ \cmidrule(lr){2-2}
 & \multicolumn{1}{l}{$q=2$} &        &         &         &         &         \\ \cmidrule(lr){2-2}
 & bias                      & 0.0019  & -0.0023 & -0.0005  &         &         \\
 & s.e.                      & 0.1264  & 0.0929  & 0.0801  &         &         \\ \bottomrule \bottomrule
\end{tabular}
\caption{Bias and standard error (s.e.) for the Unobserved-Components FAT when the counterfactual is specified as in the left-most column. The time trend component is heterogeneous across individuals (top panel), with the addition of a cross-sectionally heterogeneous autoregressive component (bottom panel). Stationary initial condition for the AR($1$) component for each $i$.}
\label{tab:table3}
\end{table}

\subsection{Empirical replication: Overdose mortality and medical cannabis laws}\label{sec:overdose}

We use data from \cite{Shover} to analyze the effect of adopting
legalized medical cannabis laws on opioid overdose mortality in the US. \cite{Shover} contributes to the debate about whether the adoption of such laws has decreased overdose mortality (see, e.g., \citealt{Bachuber}).

The unit of observation is at the level of state-year, with states adopting legalized medical cannabis laws from 1999 to 2017. Our analysis includes $n=34$ states that legalized medical cannabis during that period.\footnote{In our sample, $40$ states become treated during this time interval. We drop Hawaii, Colorado, and Nevada because they have too few pre-treatment observations (one and two) and Indiana, North Dakota, and West Virginia because they do not have any post-treatment observations.} The outcome of interest is the log mortality rate.

Both \cite{Bachuber} and \cite{Shover} use two-way fixed-effects estimators to analyze the effects of adopting legalized medical cannabis laws. In a staggered adoption setting, this estimation method produces biased results, e.g. \cite{GoodmanBacon}. Therefore, we first redo the analysis to remove the bias of the original studies by using various methods, such as the staggered DiD approach of \cite{CallawaySantAnna} and the generalized synthetic control method of \cite{Xu}. 
 Figure \ref{fig:CallawaySantAnna} uses the method of \cite{CallawaySantAnna} with not-yet-treated-states as control units, while Figure \ref{fig:Xu} uses the method of \cite{Xu} to impute counterfactuals for each treated unit using a linear two-way fixed effects regression. This analysis finds an initial increase in overdose mortality and then a reversal, but neither effect is statistically significant. 

\begin{figure}
\begin{centering}
\includegraphics[width=0.5\linewidth]{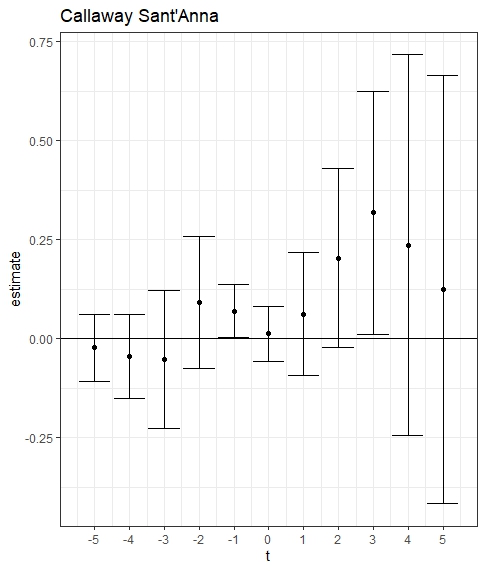}
\par\end{centering}
\caption{Overdose mortality rate as a function of time to adoption using not-yet-treated-states as control units.}
\label{fig:CallawaySantAnna}
\end{figure}

\begin{figure}
\begin{centering}
\includegraphics[width=0.5\linewidth]{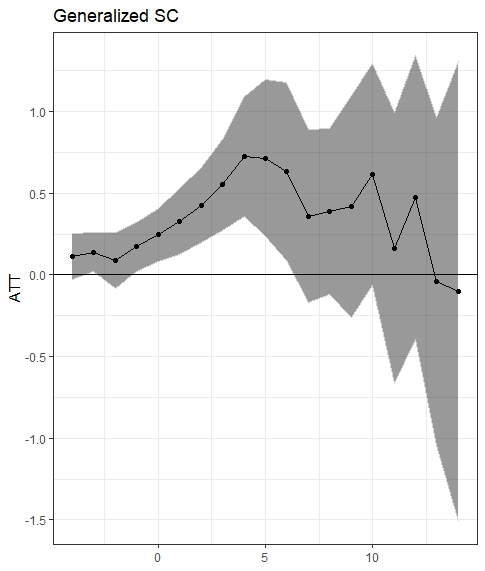}
\par\end{centering}
\caption{Overdose mortality rate as a function of time to adoption via generalized SC.}
\label{fig:Xu}
\end{figure}

We then implement FAT. We start with a plot of the log mortality rate averaged across states as a function of time to adoption; see Figure \ref{fig:timetoadoption}, in order to get a sense for the time series properties of the outcome of interest. The outcome appears to be non-stationary over the pre-treatment period, so we choose the smallest possible estimation window $R=q+1$ to calculate FAT (as in \eqref{eq:UC_FAT}) and let the order of the polynomial be $q=1,2$. Figure \ref{fig:Shover_FAT_placebo} shows our FAT estimates for the ATT as a function of the forecast horizon. Our estimates are relatively stable across the polynomial orders, and our results show a slight increase in the overdose mortality rate. However, the increase does not appear to be statistically significant across the forecast horizon. Our results thus corroborate those of the approaches of \cite{CallawaySantAnna} and \cite{Xu}, but without using a control group.

We also compute placebo FAT by assuming that the law was adopted by each state either one year earlier or two years earlier,  respectively, in Figure \ref{fig:Shover_FAT_placebo}. We consider these values because FAT is computed using the outcome either one period before adoption (when $q=1$) or two periods before adoption (when $q=2$). Figure \ref{fig:Shover_FAT_placebo} shows the placebo FAT estimates computed with $R=q+1$, $q=1,2$. The forecast horizon is $h=1$. The figure shows that the estimated placebo FATs are statistically insignificant. Although this is not a test of our assumptions, these placebo estimates offer suggestive evidence that FAT estimates may be interpreted as the ATT across different forecast horizons.

\begin{figure}
\begin{centering}
\includegraphics[width=0.5\linewidth]{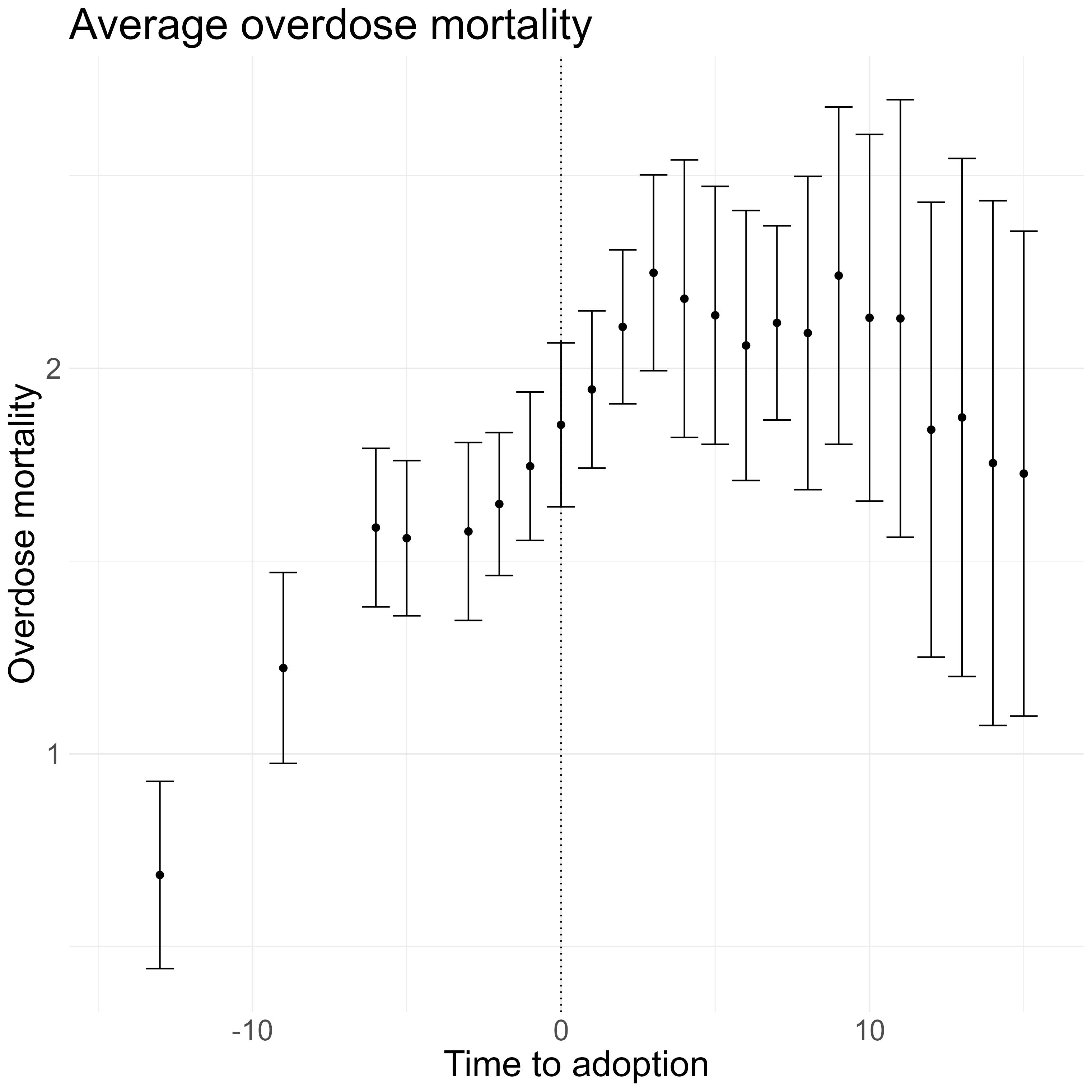}
\par\end{centering}
\caption{Log mortality rate averaged across states as a function
of time-to-adoption.}
\label{fig:timetoadoption}
\end{figure}

\begin{figure}[tb]
\begin{centering}
\includegraphics[width=0.6\linewidth]{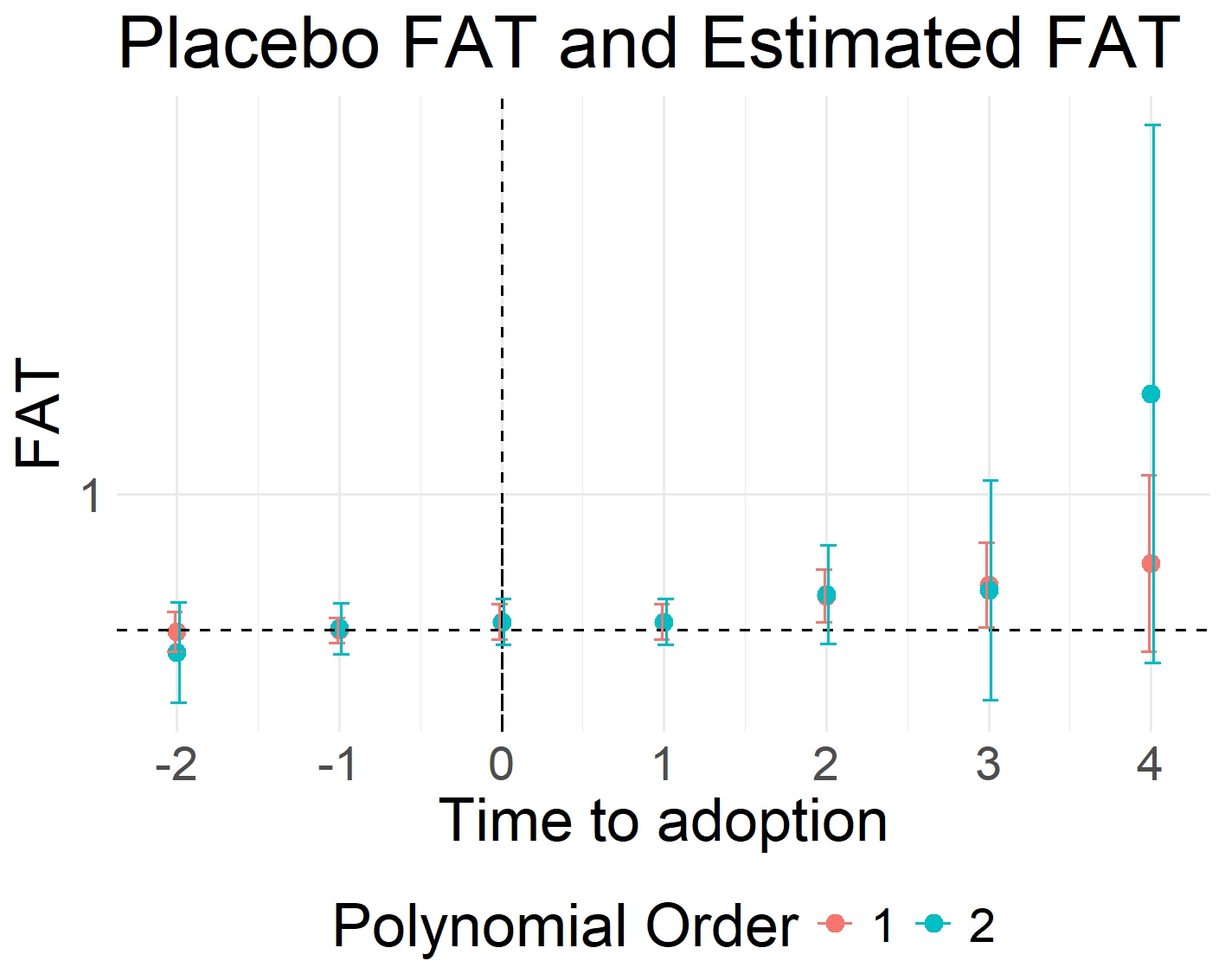}
\par\end{centering}
\caption{The effect of adopting legalized marijuana laws on overdose mortality rate. Estimates of FAT (to the right of the dashed vertical line) are shown as a function of forecast horizon (horizontal axis) and polynomial order (colored lines). FAT is computed with a variable number of pre-treatment time periods, $R=q+1, q=1,2$. Estimates of placebo FAT (to the left of the dashed vertical line) are shown as a function of lag (horizontal axis) and polynomial order (colored lines). Lag 0 is the actual adoption year. Placebo FAT is computed with a variable number of pre-treatment time periods, $R=q+1, q=1,2$.}
\label{fig:Shover_FAT_placebo}
\end{figure}

\subsection{Empirical replication: Refugees and far-right support}\label{sec:DIDrefugees}

In this replication exercise, we use data from \cite{Dinas}
which examines the relationship between refugee arrivals and support
for the far right. \cite{Dinas} consider the case of Greece,
and make use of the fact that some Greek islands (those close to the
Turkish border) witnessed sudden and unexpected increases in the number
of refugees during the summer of 2015, while other nearby Greek islands
saw much more moderate inflows of refugees. The municipalities in
the former Greek islands are considered treated, while the municipalities
in the latter are considered controls. The authors use a standard DiD
analysis to assess whether the treated municipalities were more supportive
of the far-right Golden Dawn party in the September 2015 general election.
The original data set contains a total of 96 municipalities, 48 of which were treated, and data on four elections: three pre-treatment elections in 2012, 2013, 2015, and one post-treatment election in 2016.
The outcome of interest is the vote share for Golden Dawn (GD). Figure \ref{fig:Dinas_PP} shows the vote share for GD averaged across municipalities, treated and control, before and after the treatment time.

\begin{figure}
\begin{centering}
\includegraphics[width=0.6\linewidth]{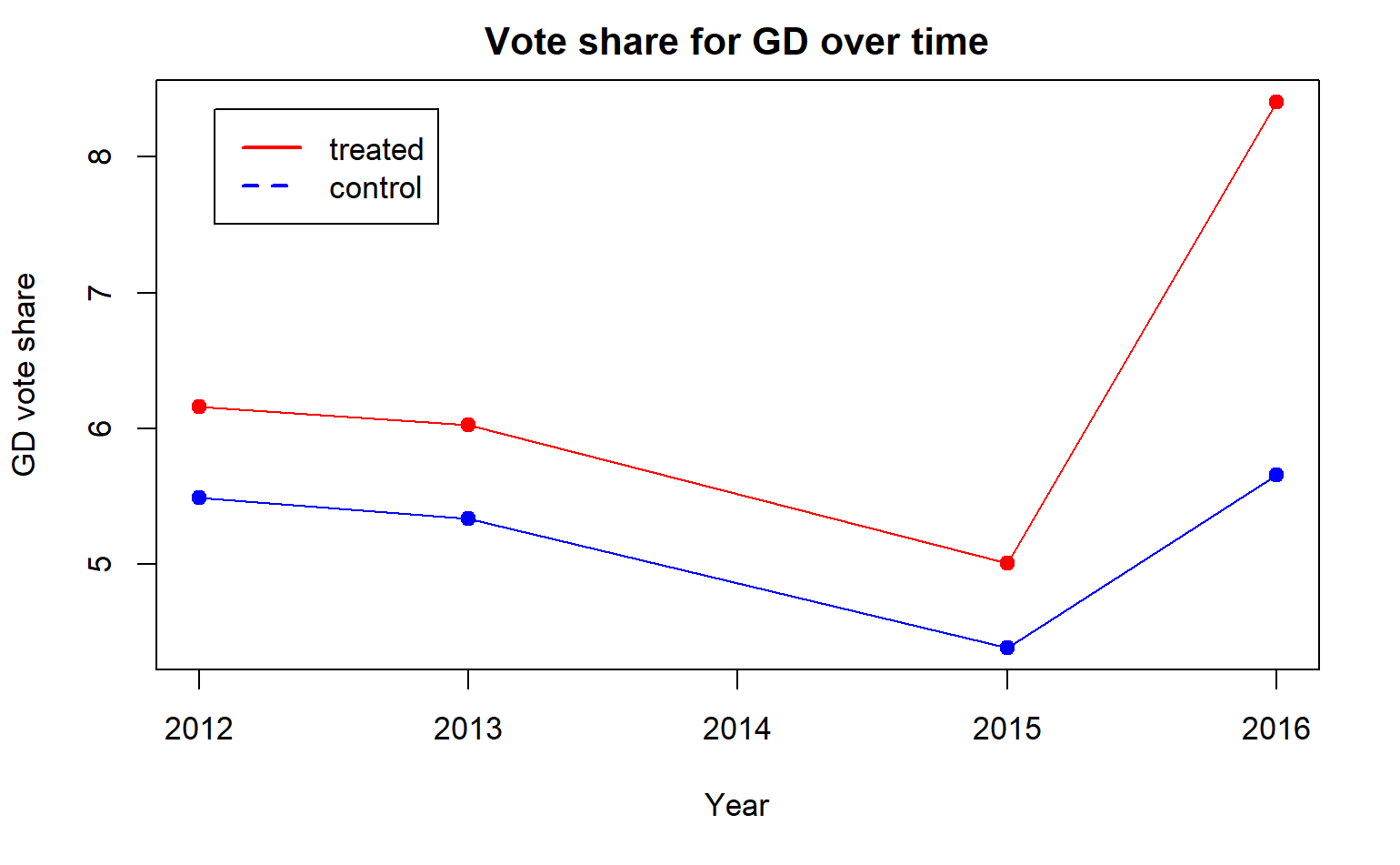}
\par\end{centering}
\caption{Vote share for Golden Dawn averaged across municipalities before and after $2015$ for municipalities that were treated (red) and control (blue).}
\label{fig:Dinas_PP}
\end{figure}

We use data on both the treated and the control municipalities to
compute $\widehat{\rm{DFAT}}_h$ in equation (\ref{DFAThat}) with $h=1$ and show that our estimate replicates
the original DiD estimates. 
This application can be viewed as a ``worst-case''
scenario for our proposed estimator since the number of treated units
is very small. We show results that use all three pre-treatment elections,
in which case the order of the polynomial is $q_{i}=q\in\left\{0,1,2\right\} $,
and results that use only the $2013$ and $2015$ pre-treatment elections,
in which case the order of the polynomial is $q_{i}=q\in\left\{ 0,1\right\} $.
Note that we perform municipality-specific polynomial regressions
to compute the forecasted vote share -- the counterfactual outcome
of interest, using the same polynomial order across all municipalities. 

As Table \ref{tab:Dinas} shows, our DFAT results are comparable with those in
the original paper. The DiD estimates in the original paper are $0.0206$
and $0.0208$ when using $2013$ and $2015$ as pre-treatment periods
and all pre-treatment periods, respectively. The two-way fixed-effects
estimate is $0.021$ with a standard error of $0.0393$.

We include here placebo FAT estimates which we compute by assuming that the treatment took place in $2015$. The point here is to assess if the estimated placebo FAT is statistically indistinguishable from zero. We use a polynomial of degree $q=0$ when the pre-treatment year is $2013$ and a polynomial of degree $q=1$ when the pre-treatment years are $2012$ and $2013$.  The placebo FATs are all statistically insignificant: the placebo FAT for the treated when $q=0$ is $-0.010$ with a standard error of $0.013$, while when $q=1$ it is $-0.009$ with a standard error of $0.029$. The placebo FAT for the control when $q=0$ is $-0.010$ with a standard error of $0.012$, while when $q=1$ it is $-0.008$ with a standard error of $0.017$. The corresponding placebo DFATs are then: $0$ for $q=0$ with a standard error of $0.016$ and $-0.001$ for $q=1$ with a standard error of $0.034$.

\begin{table}[]
\scriptsize
\centering
\begin{tabular}{@{}lllll@{}}
\cmidrule(r){1-4}
                 & FAT Treated                                              & FAT Control                                              & DFAT  &  \\ \cmidrule(r){1-4}
Polynomial order & \multicolumn{3}{c}{2013-2015}                                                                                               &  \\ \cmidrule(r){1-4}
$q=0$              & \begin{tabular}[c]{@{}l@{}}0.029 \\ (0.016)\end{tabular} & \begin{tabular}[c]{@{}l@{}}0.008 \\ (0.009)\end{tabular} & 0.021 &  \\
$q=1$              & \begin{tabular}[c]{@{}l@{}}0.054 \\ (0.028)\end{tabular} & \begin{tabular}[c]{@{}l@{}}0.032 \\ (0.026)\end{tabular}  & 0.022 &  \\ \cmidrule(r){1-4}
                 & \multicolumn{3}{c}{2012-2015}                                                                                               &  \\ \cmidrule(r){1-4}
$q=0$              & \begin{tabular}[c]{@{}l@{}}0.027 \\ (0.012)\end{tabular} & \begin{tabular}[c]{@{}l@{}}0.006 \\ (0.011)\end{tabular} & 0.021 &  \\
$q=1$              & \begin{tabular}[c]{@{}l@{}}0.038 \\ (0.025)\end{tabular} & \begin{tabular}[c]{@{}l@{}}0.017 \\ (0.016)\end{tabular} & 0.019  &  \\
$q=2$              & \begin{tabular}[c]{@{}l@{}}0.053 \\ (0.036)\end{tabular} & \begin{tabular}[c]{@{}l@{}}0.030 \\ (0.027)\end{tabular} & 0.023 &  \\ \cmidrule(r){1-4}
\end{tabular}
\caption{DFAT under different polynomial orders and pre-treatment periods.}
\label{tab:Dinas}
\end{table}

\subsection{Additional simulation results}

We consider the same DGP as that corresponding to Table \ref{tab:table2} in Section \ref{indicator}. We report simulation results for a smaller sample size, $N=30$, in Table \ref{tab:table_smallN}. This particular sample size is approximately that of two of our applications (in two of our applications, the sample size is $34$). The simulation results show that the large standard errors seen in the applications may be due to the small sample size. We note here that the size of the standard error of our estimator seen is comparable to that obtained in the studies that we replicate.

\begin{table}[]
\spacingset{1.7}
\centering
\tiny
\begin{tabular}{@{}crlllll@{}}
\toprule
\multicolumn{1}{l}{}                                                                 & \multicolumn{1}{l}{}      & \multicolumn{5}{c}{$R$}             \\ \cmidrule(l){3-7} 
                                                                                     & \multicolumn{1}{l}{}      & $q+1$ & $q+2$ & $q+3$ & $q+4$ & $q+5$ \\ \midrule
\multirow{9}{*}{\begin{tabular}[c]{@{}c@{}}Stationary AR(1) \\ $I_1=1, I_2 = 0 = I_3$ \end{tabular}}     & \multicolumn{1}{c}{$q=0$} &       &       &       &       &       \\ \cmidrule(lr){2-2}
 & bias                      & -0.0001 & -0.0023 & 0.0025  & 0.0011 & 0.0019 \\
 & s.e.                      & 0.2396 & 0.2139   & 0.2056  & 0.2056  & 0.2044  \\ \cmidrule(lr){2-2}
 & \multicolumn{1}{l}{$q=1$} &        &         &         &         &         \\ \cmidrule(lr){2-2}
 & bias                      & 0.0042 & -0.0097  & -0.0008  & -0.0016  &         \\
 & s.e.                      & 0.4027 & 0.3105  & 0.2745  & 0.2574  &         \\ \cmidrule(lr){2-2}
 & \multicolumn{1}{c}{$q=2$} &        &         &         &         &         \\ \cmidrule(lr){2-2}
 & bias                      & 0.0252 & -0.0144  & -0.005  &         &         \\
 & s.e.                      & 0.7297 & 0.4874  & 0.3984  &         &         \\ \midrule
 \multirow{9}{*}{\begin{tabular}[c]{@{}c@{}}Stationary AR(1) \\ + unit root \\ $I_1=1=I_2, I_3=0$ \end{tabular}}     & \multicolumn{1}{c}{$q=0$} &       &       &       &       &       \\ \cmidrule(lr){2-2}
 & bias                      & -0.0093 & 0.0003  & 0.0031  & 0.00045  & -0.0015 \\
 & s.e.                      & 0.3102 & 0.3031  & 0.3118  & 0.3315  & 0.3467  \\ \cmidrule(lr){2-2}
 & \multicolumn{1}{l}{$q=1$} &        &         &         &         &         \\ \cmidrule(lr){2-2}
 & bias                      & -0.0285 & -0.0149  & -0.0005  & -0.0023  &         \\
 & s.e.                      & 0.4966 & 0.4141  & 0.3641  & 0.3533  &         \\ \cmidrule(lr){2-2}
 & \multicolumn{1}{c}{$q=2$} &        &         &         &         &         \\ \cmidrule(lr){2-2}
 & bias                      & -0.0489 & -0.045  & -0.0156  &         &         \\
 & s.e.                      & 0.8543 & 0.6345  & 0.5282  &         &         \\ \midrule
 \multirow{9}{*}{\begin{tabular}[c]{@{}c@{}}Stationary AR(1) \\+ linear trend  \\ $I_1=1=I_3, I_2=0$ \end{tabular}} & \multicolumn{1}{l}{$q=0$} &       &       &       &       &       \\ \cmidrule(lr){2-2}
 & bias                      & 0.9999 & 1.4977 & 2.0025  & 2.5011 & 3.0019  \\
 & s.e.                      & 0.2396 & 0.2139 & 0.2056  & 0.2056 & 0.2044  \\ \cmidrule(lr){2-2}
 & \multicolumn{1}{l}{$q=1$} &        &         &         &         &         \\ \cmidrule(lr){2-2}
 & bias                      & 0.0042  & -0.0097 & -0.0008 & -0.0016 &         \\
 & s.e.                      & 0.4027 & 0.3105  & 0.2745  & 0.2574  &         \\ \cmidrule(lr){2-2}
 & \multicolumn{1}{l}{$q=2$} &        &         &         &         &         \\ \cmidrule(lr){2-2}
 & bias                      & 0.0252 & -0.0144  & -0.005  &         &         \\
 & s.e.                      & 0.7297 & 0.4874  & 0.3984  &         &         \\ \midrule
 \multirow{9}{*}{\begin{tabular}[c]{@{}c@{}}Stationary AR(1) \\ + linear trend \\ + unit root \\ $I_1=I_2=I_3=1$  \end{tabular}} & \multicolumn{1}{l}{$q=0$} &       &       &       &       &       \\ \cmidrule(lr){2-2}
 & bias                      & 0.9907 & 1.5003  & 2.0031  & 2.5004  & 3.0015  \\
 & s.e.                      & 0.3102 & 0.3031  & 0.3118  & 0.3315  & 0.3467  \\ \cmidrule(lr){2-2}
 & \multicolumn{1}{l}{$q=1$} &        &         &         &         &         \\ \cmidrule(lr){2-2}
 & bias                      & -0.0285 & -0.0149 & -0.0005 & 0.0023 &         \\
 & s.e.                      & 0.4966 & 0.4141 & 0.3641 & 0.3533 &         \\ \cmidrule(lr){2-2}
 & \multicolumn{1}{l}{$q=2$} &        &         &         &         &         \\ \cmidrule(lr){2-2}
 & bias                      & -0.0489   & -0.045    & -0.0156  &         &         \\
 & s.e.                      & 0.8543  & 0.6345   & 0.5282  &         &         \\ \bottomrule \bottomrule
\end{tabular}
\caption{Bias and standard error (s.e.) for the Unobserved-Components FAT when the counterfactual is specified as indicated in the left-most column. Sample size $N=30$.}
\label{tab:table_smallN}
\end{table}

\end{document}